\def\doctype{1}
\def\supp{0}
\newcommand{\elsewhere}{\ifnum\supp=1{supplementary file attached with the submission}\else{appendix}\fi}
\def\tsubmission{1}
\newcommand{\full}[1]{}
\newcommand{\submit}[1]{#1}
\newcommand{\full}[1]{#1}
\newcommand{\submit}[1]{}
\newcommand{\set}[1]{\{#1\}}
\newcommand{\E}{\mathbf{E}}
\newcommand{\Sec}[1]{\hyperref[sec:#1]{\S\ref*{sec:#1}}} %section
\newcommand{\Eqn}[1]{\hyperref[eqn:#1]{(\ref*{eqn:#1})}} %equation
\newcommand{\Fig}[1]{\hyperref[fig:#1]{Fig.\,\ref*{fig:#1}}} %figure
\newcommand{\Tab}[1]{\hyperref[tab:#1]{Tab.\,\ref*{tab:#1}}} %table
\newcommand{\Thm}[1]{\hyperref[thm:#1]{Thm.\,\ref*{thm:#1}}} %theorem
\newcommand{\Lem}[1]{\hyperref[lem:#1]{Lem.\,\ref*{lem:#1}}} %lemma
\newcommand{\Prop}[1]{\hyperref[prop:#1]{Prop.~\ref*{prop:#1}}} %property
\newcommand{\Cor}[1]{\hyperref[cor:#1]{Cor.~\ref*{cor:#1}}} %corollary
\newcommand{\Def}[1]{\hyperref[def:#1]{Defn.~\ref*{def:#1}}} %definition
\newcommand{\Alg}[1]{\hyperref[alg:#1]{Alg.\,\ref*{alg:#1}}} %algorithm
\newcommand{\Ex}[1]{\hyperref[ex:#1]{Ex.~\ref*{ex:#1}}} %example
\newcommand{\Clm}[1]{\hyperref[clm:#1]{Claim~\ref*{clm:#1}}} %example
\newcommand{\Step}[1]{\hyperref[step:#1]{Step~\ref*{step:#1}}} %example
\newcommand{\cW}{\mathcal{W}}
\newcommand{\cE}{{\cal E}}
\newcommand{\wedgeres}{{\it w-list}}
\newcommand{\edgeres}{{\it e-list}}
\newcommand{\mgt}{{\sc MG-Triangle}}
\newcommand{\qed}{\hfill $\Box$}
\newcommand{\update}{{\tt update}}
\newcommand{\estimate}{\mgt}
\newcommand{\wed}{W}
\newcommand{\tri}{T}
\newcommand{\trans}{\tau}
\newcommand{\DBLP}{{\tt DBLP}}
\newcommand{\enron}{{\tt Enron}}
\newcommand{\flickr}{{\tt Flickr}}
\newcommand{\ignore}[1]{}
\begin{document}
%\twocolumn[%
% \begin{center}
% {\Large \bf  Counting Triangles in Real-World Graph Streams: \\ Dealing with Repeated Edges and 
%Time Windows\footnote{This work was funded by the DARPA GRAPHS and by DOE ASCR Applied Mathematics programs.  Sandia National Laboratories is a multi-program    laboratory managed and operated by Sandia Corporation, a wholly    owned subsidiary of Lockheed Martin Corporation, for the    U.S. Department of Energy's National Nuclear Security    Administration under contract DE-AC04-94AL85000.}} %% Paper title

% \medskip

% \centerline{Madhav Jha \ \ \ C. Seshadhri\  \ \ Ali Pinar \footnote{Sandia National Laboratories, Livermore, CA. Email: \{mjha, scomand, apinar\}@sandia.gov; }}
      %% Author name(s)
% \bigskip
%\end{center}  ]
\title{\  \\[-7ex] \Large Counting Triangles in Real-World Graph Streams: Dealing with Repeated Edges and 
Time Windows \vspace{-1ex}
\thanks{This work was funded by the DARPA GRAPHS
    and DOE ASCR applied math programs.  Sandia National Laboratories is a multi-program
    laboratory managed and operated by Sandia Corporation, a wholly
    owned subsidiary of Lockheed Martin Corporation, for the
    U.S. Department of Energy's National Nuclear Security
    Administration under contract DE-AC04-94AL85000.}}
\author{\large Madhav Jha \ \ \ C. Seshadhri\  \ \ Ali Pinar \thanks{Sandia National Laboratories, Livermore, CA. Email: \{mjha, scomand, apinar\}@sandia.gov; } } 
%\and 
%thanks{Sandia National Laboratories.} \\
%\and
%Ali Pinar\thanks{Sandia National Laboratories.}}
\date{\vspace{-4ex}}

\maketitle

\begin{abstract}
Real-world graphs often manifest as a massive temporal ``stream" of edges. 
The need for real-time analysis of such large graph streams has led to progress on low memory, one-pass streaming graph algorithms. 
These algorithms were designed for simple graphs,  assuming an edge is not repeated in the stream. Real graph streams however, are almost always multigraphs i.e., they contain many duplicate edges. 
The assumption of no repeated edges requires an extra pass \emph{storing all the edges} just for deduplication, which defeats the purpose of small memory algorithms.  
%We posit that for streaming algorithms to be useful in practice,  multiple edges must be dealt with small space. 

 We describe an algorithm, \mgt, for estimating the triangle count of a multigraph stream of edges. 
 We show that  all previous streaming algorithms for triangle counting fail for multigraph streams, despite their impressive accuracies for simple graphs. 
 The bias created by duplicate edges is a major problem, and leads these algorithms astray.
 \mgt{} avoids these biases through careful debiasing strategies and has provable theoretical guarantees and excellent empirical performance.
 \mgt{} builds on the previously introduced wedge sampling methodology.  
 Another challenge in analyzing temporal graphs is finding the right  temporal window size. \mgt{} seamlessly handles multiple time windows, and does not require  committing to any window size(s)  a priori.  We apply \mgt{} to discover fascinating transitivity and triangle trends
in real-world graph streams.

\end{abstract}

\section{Introduction} \label{sec:intro}

Many massive graphs appear in practice as a temporal \emph{stream of edges}.
People call each other on the phone, exchange emails, or co-author a paper;  computers exchange messages;  animals come in the vicinity of each other;  companies trade with each other. Each such interaction is modeled as an edge in the graph, and has a natural timestamp.

Due to the need for real-time awareness despite the volume of such transactions, 
there is much interest in processing temporal graphs using fast, limited-memory algorithms.
Formally, think of the input as a sequence of edges $e_1, e_2, \ldots, e_m$. Some of the edges may be repeated, meaning that (say) $e_1\! =\! e_{100}\! =\! e_{125}\! =\! (u,v)$.
We are interested in small space streaming algorithms that make a \emph{single pass} over the stream $e_1, e_2, \ldots, e_m$. 
Such an algorithm maintains data structures that are many orders of magnitude smaller than the stream itself.
At every timestep $t$, these data structures are updated rapidly (possible randomly). 
The algorithm computes an accurate estimate for the property of interest on the graph seen so far.
Because of the single pass and small space, the algorithm cannot revisit edges that it has forgotten. Furthermore, it cannot always determine if the new edge, $e_t$, has appeared before.
This work focuses on triangle counting in this setting.

\textbf{Graph vs multigraph:}  Previous results assume that
the edge stream forms a simple graph, and no edge is  repeated in the stream.
This is a useful assumption for algorithmic progress; yet, often false in practice.
Real-world graph streams are multigraphs, in that same edges can occur repeatedly in the data stream.
The simple graph representation is obtained by removing duplicate edges.
For example, the classic \enron{} email dataset is really a multigraph with 1.14M edges, while the underlying simple graph has only 297K edges.
Similarly, a \DBLP{} co-authorship graph recently collected is a multigraph with 3.63M edges, but the underlying simple graph  has only 2.54M edges. 
Close to 10 million edges in 
a popular dynamic \flickr{} network dataset (see~\cite{graphrepository2013, mislove-2008-flickr}) are repeated.

The assumption of simplicity is implemented in practice with an extra pass
to remove duplicate edges. \emph{This pass requires storage of the entire simple graph, which is completely ignored
in all previous work.} Indeed, if one can store the entire simple graph, there exist much better
algorithms for triangle counting~\cite{ScWa05-2,TsKaMiFa09,SePiKo13}.
%Our aim is to count the triangles in the underlying simple graph in a single pass over the multigraph stream. There is no existing result that can do this.
We posit that for streaming algorithms to be actually useful in practice, 
multiple edges must be dealt with small space. 
There is much work on streaming graph algorithms (see surveys~\cite{NeAhKo13,McG14}). 
Yet this algorithmic work ignores important issues  such as  repeated edges and temporal aggregation that arise when looking at a real-world graph stream, as  demonstrated in \Fig{prev-multi}.

\begin{figure*}[t]
  \centering
  \subfloat[{Real stream}]{\includegraphics[width=0.4\textwidth]{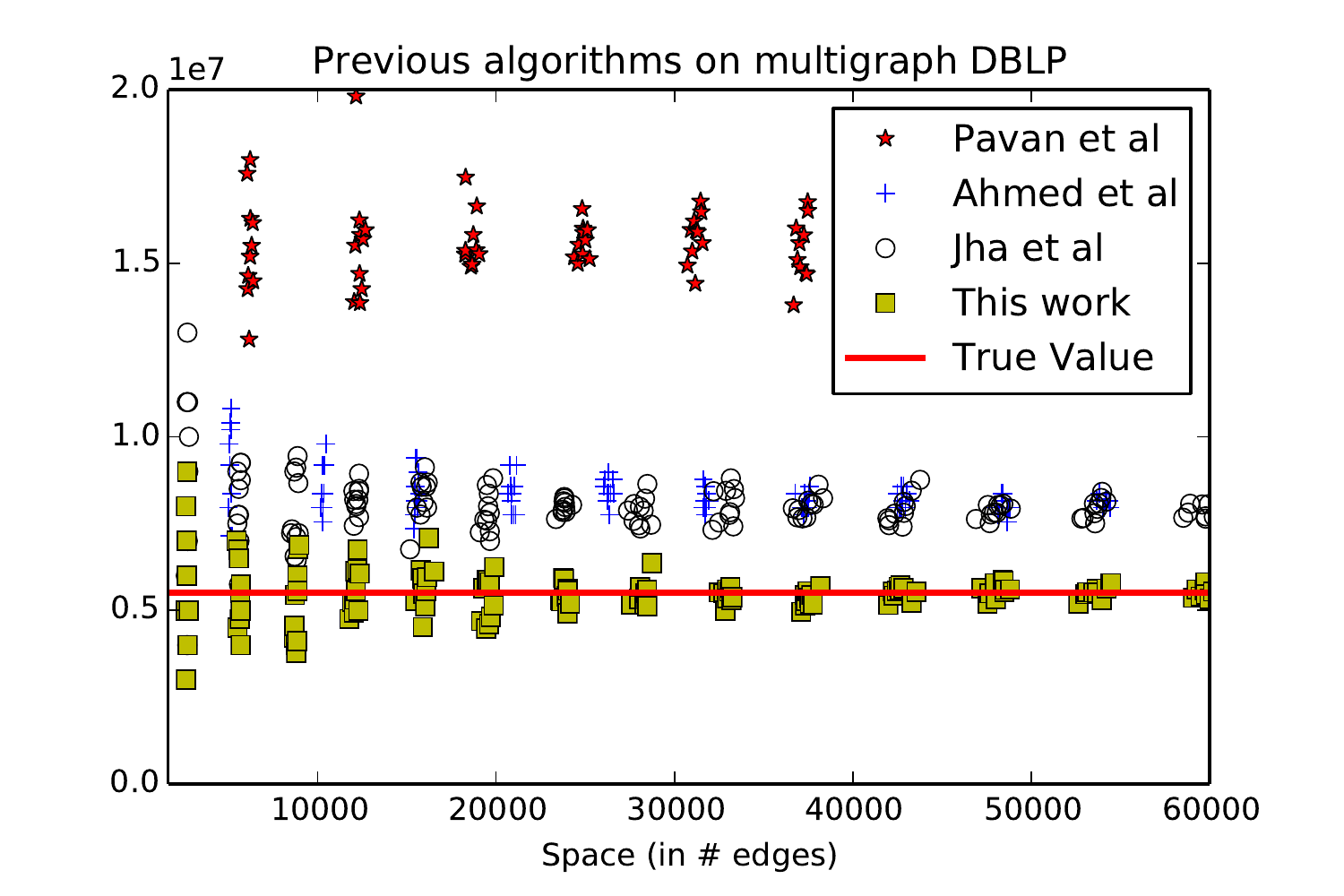}\label{fig:prev-multi}}\
  \subfloat[{Simple}]{\includegraphics[width=0.4\textwidth]{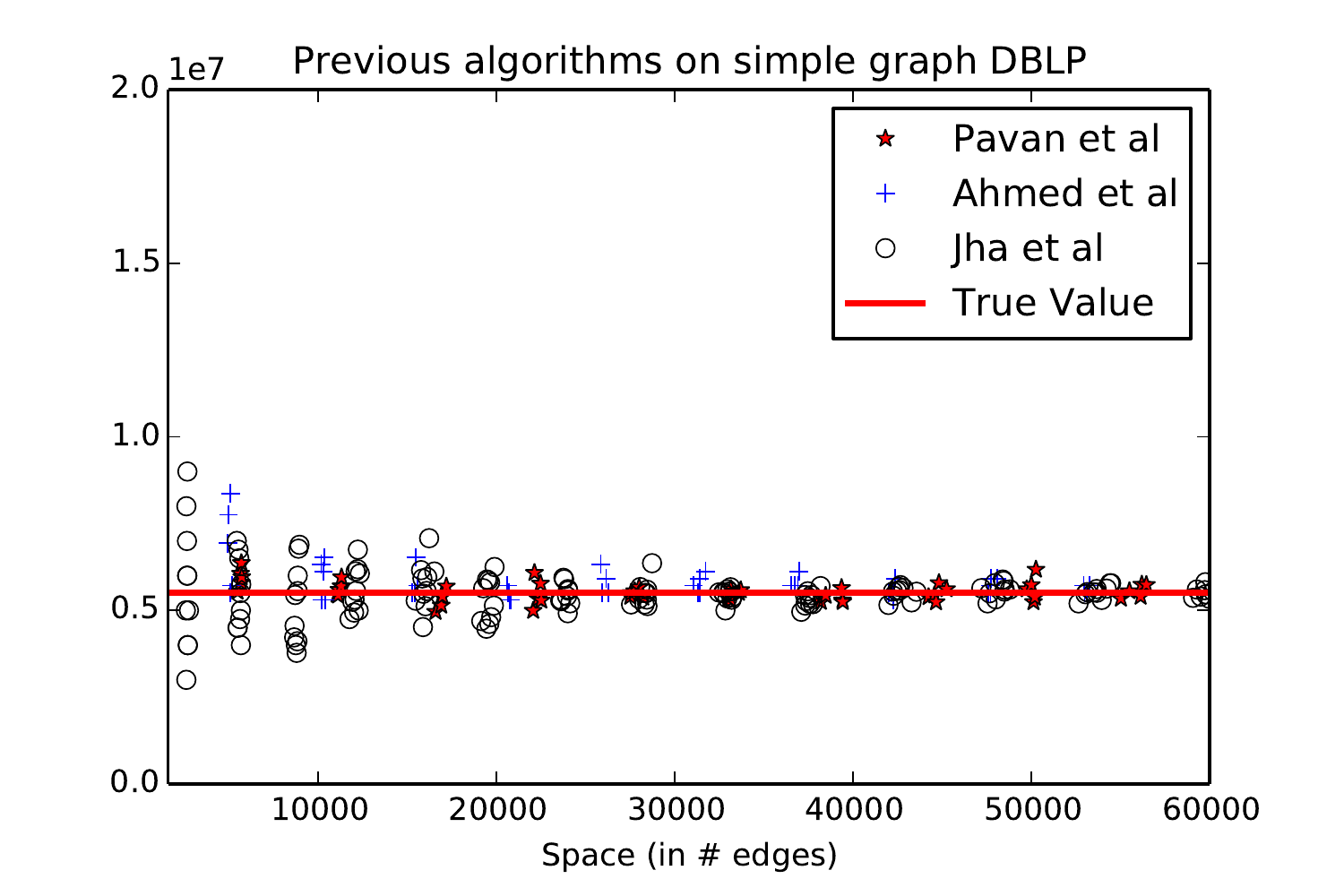}\label{fig:prev-simple}}\
   \caption{Previous works when run on multigraph \DBLP{} converge (as storage increases) to an incorrect value. On the simple graph \DBLP, they converge correctly.}
  \label{fig:prev-works}
\end{figure*}

\textbf{Aggregation over time:} Given a stream of edges, what is the actual graph? The most common answer
is to simply aggregate all edges ever seen. Again, this is a useful assumption for algorithmic progress, but
ignores the temporal aspect of the edges. Time is a complex issue and there are no clear solutions.
One may consider sliding windows in time or have some decay of edges. For simplicity, we focus on sliding
time windows (like edges seen in the past month, or past year).
Even for sliding windows,
it is not clear what the width should be. Observations can often be an artifact of the window size~\cite{Mac12}. 
Therefore, it is essential to observe multiple time windows at the same time, instead of committing to a single one. 

\Fig{dblp-window} shows how our algorithm \estimate \ can analyze  different time windows with a single run.  Our algorithm estimates the triangle counts for any time window without altering its data structures, as the time window is only used in calculating the estimate.  
%An important strength of our algorithm is its ability to process time windows simultaneously.

\begin{figure*}[t]
  \centering
  \subfloat[{\DBLP: transitivity}]{\includegraphics[width=0.23\textwidth]{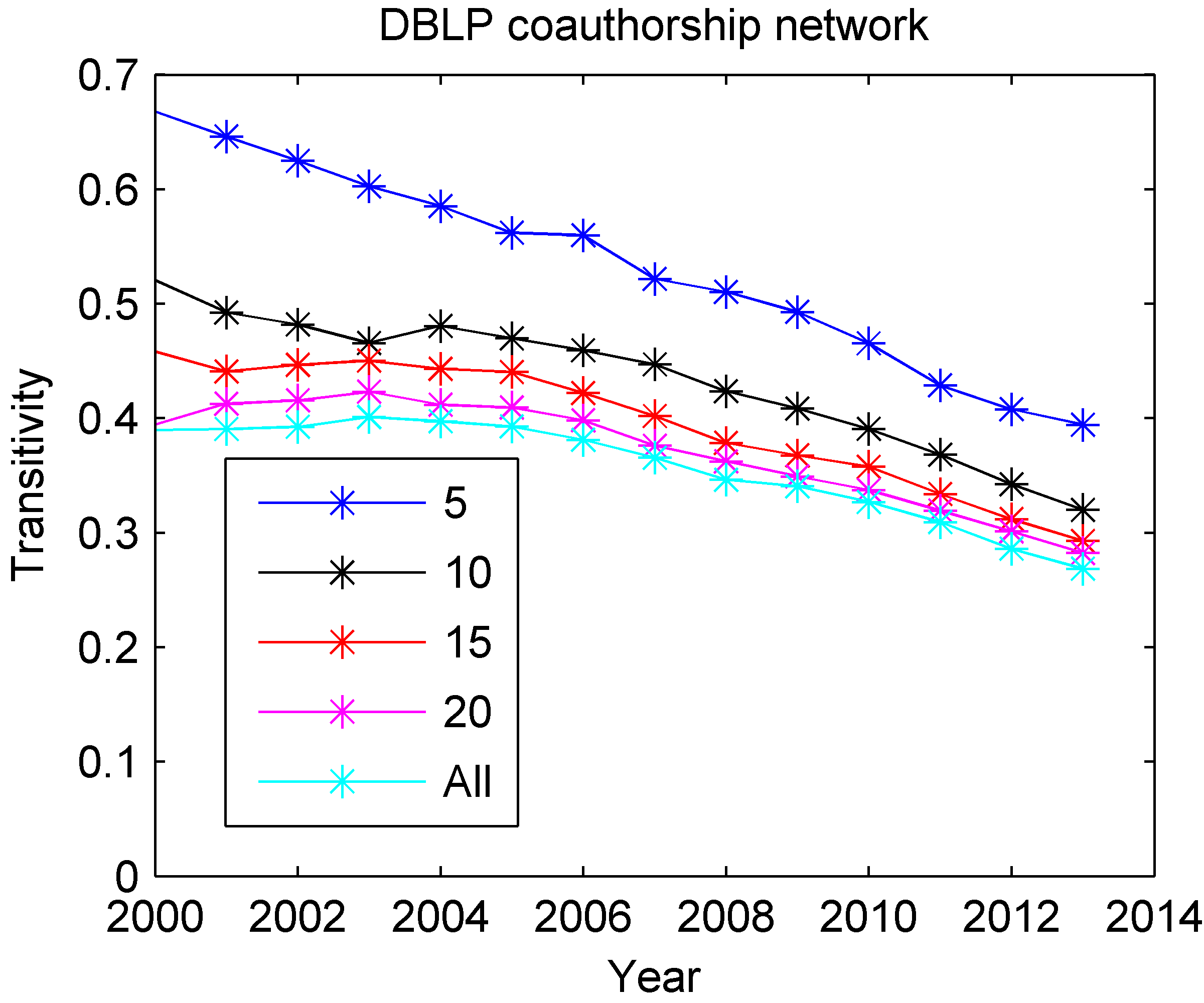}\label{fig:dblp-w-gcc} } \ 
  \subfloat[{\DBLP: triangles}]{\includegraphics[width=0.23\textwidth]{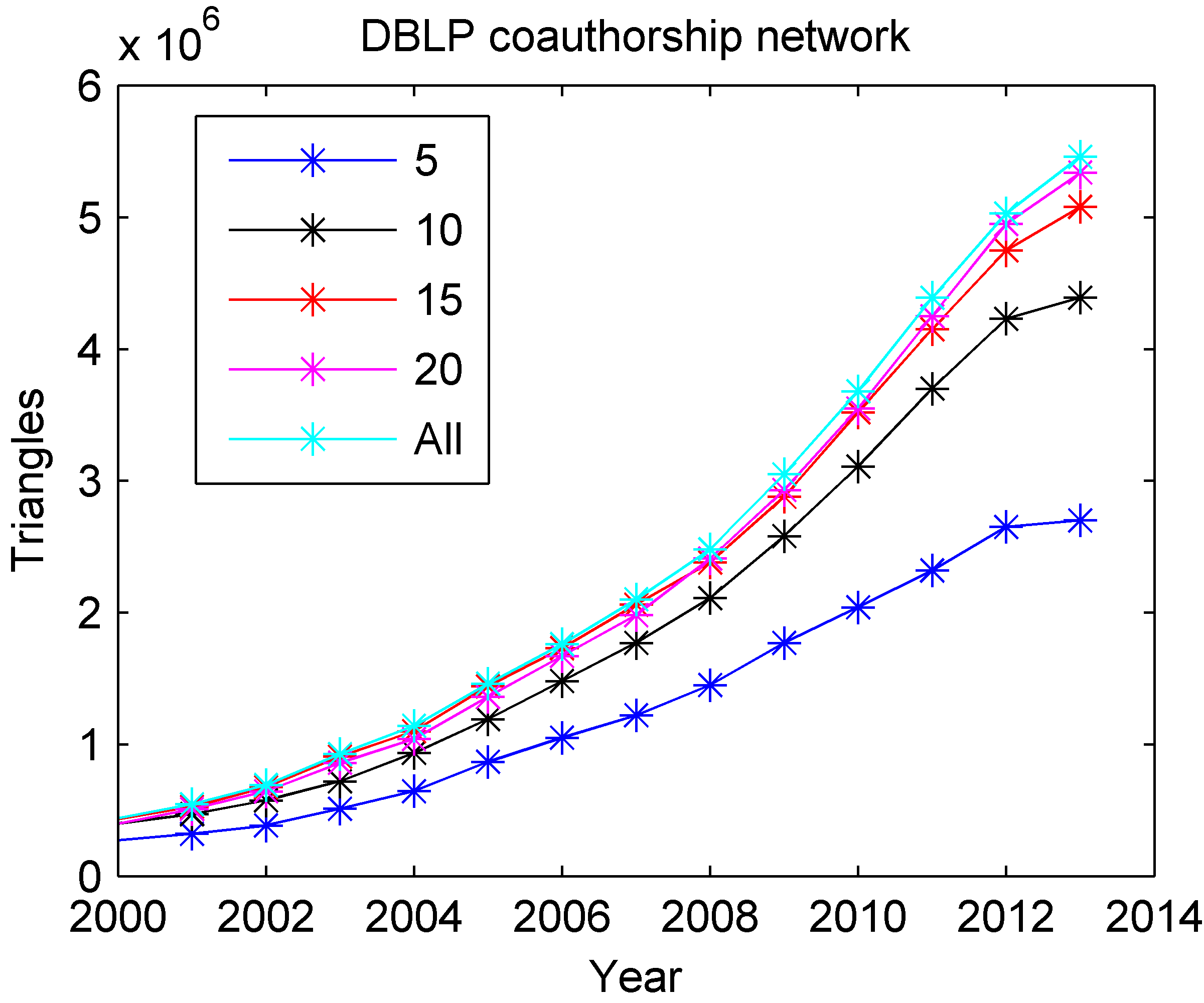}\label{fig:dblp-w-tri}}\ 
   \subfloat[{\enron: Transitivity}]{\includegraphics[width=0.23\textwidth]{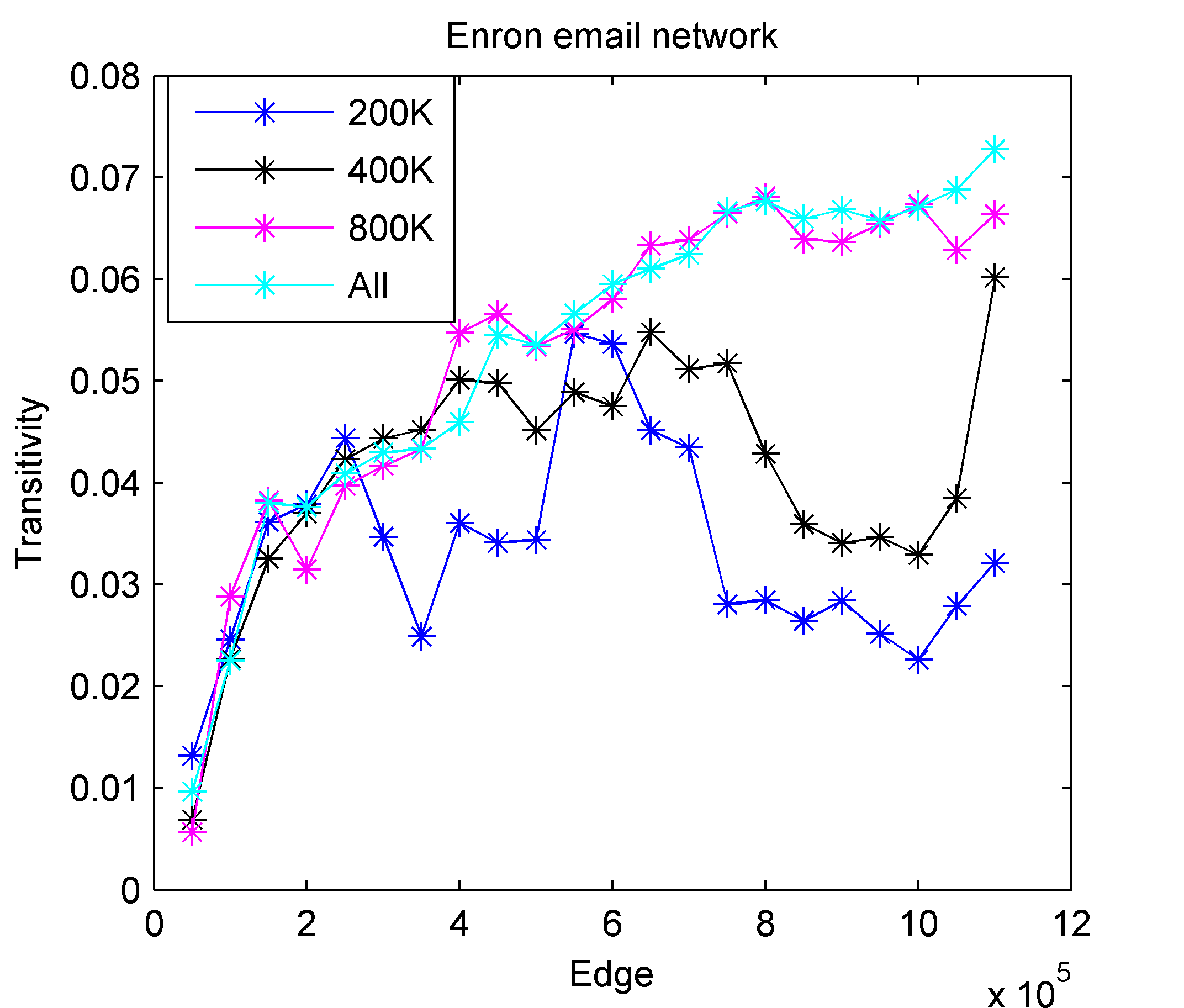}\label{fig:enron-w-gcc}} \ 
  \subfloat[{\enron: triangles}]{\includegraphics[width=0.23\textwidth]{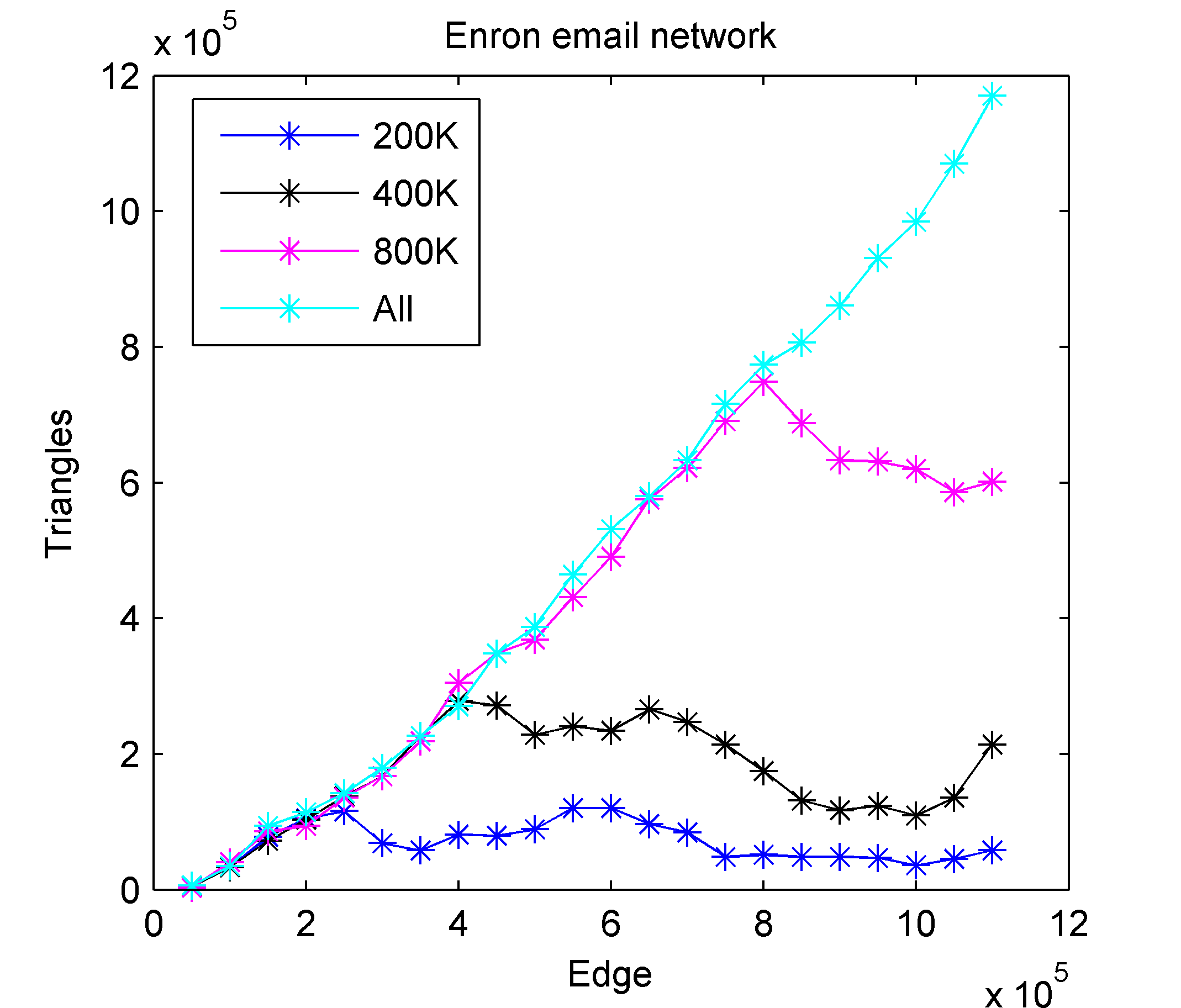}\label{fig:enron-w-tri}}
%   \caption{Given the temporal \enron{} stream, we use number of edges to define time windows. We run our algorithm with $\alpha = 0.04$ and $\beta = 0.4$.}
%  \label{fig:enron-window}

   \caption{Transitivity and triangles estimates for varying window lengths for \DBLP\  coauthorship  and \enron\  email networks: Our algorithm
   stores less than 3\% of the multigraph stream yet produces fine-grained triangle results.}
  \label{fig:dblp-window}
\end{figure*}

\subsection{Triangle counting}
The abundance of triangles has been observed in networks arising in numerous scenarios, 
such as  social sciences~\cite{Co88,Po98,Burt04,FoDeCo10}, spam detection~\cite{EcMo02}, community detection~\cite{GS12}, finding common topics on the web~\cite{BeBoCaGi08}, bioinformatics~\cite{Milo2002}, and modeling and characterizing real-world networks~\cite{SeKoPi11,DuPiKo12}. Subsequently, there has been a lot of work on triangle counting in graph streams \cite{JoGh05,BuFrLeMaSo06,AhGuMc12,KaMeSaSu12,TaPaTi13,PaTaTi+13,JhSePi13,AhDuNe+14}, and  in various other settings (see e.g., \cite{SPK14} and references therein).
The result of Ahmed et al.~\cite{AhDuNe+14} is arguably the state-of-the-art, with a storage
significantly smaller than previous algorithms. None of these results explicitly deal with multigraphs.

Formally, we are processing a multigraph
stream $e_1, e_2, \ldots, e_m$. At every time $t$, consider the underlying \emph{simple} graph $G_t$ formed by 
edges $e_{t-\Delta t},\ldots,e_t$. So take all these edges, and remove duplicates.
We wish to output the triangle count (alternately, the
transitivity) of $G_t$ for all times $t$. The window length $\Delta t$ may be defined in different ways.
It could either be in terms of number of edges (say, the past 10K edges), or in terms
of the semantics of timestamps (say, edges seen in the past month). Most importantly,
we want a single-pass small space algorithm to handle multiple windows lengths and do not
want different passes for each window length.

A reader may wonder why we only output estimates for the underlying simple graph.
Ideally, we would like to compute measures that involve the multigraph structure.
We agree that this is an interesting problem, and duplicates have their
own significance. Currently, it is standard to focus on simple graphs, and there is no consensus on how to define triadic measures
on  multigraphs. This is  an exciting avenue for future work.

\textbf{Why is this a difficult problem?} Multigraphs are a major challenge for triangle counting algorithms. Edges appears with varying frequencies, and (in our setting) we do not wish
to be biased by this. Furthermore, triangles can be formed in different ways.
Consider edges $a, b$, and $c$ that form a triangle. These edges may appear in the multigraph
stream in many different ways. For example, these edges could come as $a,a,\ldots, b,b,\ldots, c,c,\ldots$, or as $a,b,c,a,b,c,a,b,c,\ldots$.
(Observe how this is  \emph{not} an issue for simple graphs.)
These patterns create biases for existing triangle counting algorithms, which we explain in more detail
later. 

For now, it suffices to say that existing algorithms~\cite{BuFrLeMaSo06,JhSePi13,PaTaTi+13,TaPaTi13,AhDuNe+14} will give different estimates
for triangle counts of different multigraphs streams that contain the same simple graph.
This is demonstrated in \Fig{prev-multi}, where we run 
previous streaming triangle counting algorithms on the raw \DBLP{} multigraph stream.
Previous algorithms converge to an incorrect value as their storage increases.
They all perform extremely well if all duplicates were removed from the stream (\Fig{prev-simple}).
Previous work on multigraph mining explicitly states
triangle counting of streaming multigraphs as an open problem~\cite{CormodeM05}.

%Observe how in \Fig{prev-multi}, algorithms given by previous work converges to the wrong value, even though they do well if all duplicates
%are removed from the stream ( \Fig{prev-simple}). 
%Also observe the space usage in \Fig{space-usage}. In general, one can see in \Fig{convergence-latest} that our proposed algorithm in this work has very good convergence.

\subsection{Preliminaries}

The edge stream is denoted by $e_1, e_2, \ldots, e_m$. 
We focus on undirected graphs, so each edge is an unordered pair of vertex ids.
The simple graph formed by edges $e_{t'},\ldots,e_t$ is denoted by $G[t',t]$. 
A \emph{wedge} is a path of length $2$.
The set of wedges in a simple graph $G$ is denoted $\wed(G)$, and the set
of triangles by $\tri(G)$. A wedge in $\wed(G)$ is \emph{closed} if it participates
in a triangle and \emph{open} otherwise. The \emph{transitivity} is the fraction
of closed wedges, $\trans(G) = 3|\tri(G)|/|\wed(G)|$.
Our aim is to maintain the transitivity and triangle count (for all $t$) of the graph
$G[t-\Delta t,t]$, where $\Delta t$ is the desired window of aggregation.
The window is usually specified as a fixed number of edges or
a fixed interval of time (like month, year, etc.), though the algorithm works
for windows lengths that change with time. For convenience, we 
denote $G_t = G[t-\Delta t,t]$, $E_t = E(G_t)$, $W_t = \wed(G_t)$, $T_t = \tri(G_t)$, 
and $\trans_t = \trans(G_t)$.

\subsection{Our Contributions}

We design a small space streaming algorithm, \mgt, to estimate transitivity and triangle counts for multiple time windows on multigraphs. 
As mentioned earlier, the main technical contribution is in handling repeated edges without a separate storage-intensive deduplication
process.
%This is the first streaming algorithm for triangle counting to handle either of these aspects.
%Our algorithm and solution are by no means the only (or even the best) way of dealing with the complexities of repeated edges or time aggregation. 
We consider this work as a first step towards small space streaming analytics for real-world graph streams. %

\begin{asparaitem}
\item {\bf The multiedge problem:} We applied previous streaming triangle algorithms~\cite{PaTaTi+13,JhSePi13,AhDuNe+14}
on multigraph streams, and showed that they fail to give correct answers. \Fig{prev-multi} shows how all 
these algorithms converge as their storage increases to an incorrect estimate on a \DBLP{} multigraph stream.
Of course, these algorithms
were designed with the assumption of simple graph streams, and have excellent convergence properties (\Fig{prev-simple}). These results show how repeated edges
are a problem and why we need new algorithms for multigraph streams.

\item {\bf Theoretical and empirical proofs of convergence:} We give proofs
of convergence for \mgt. Our algorithm is based on wedge sampling~\cite{ScWa05-2,SePiKo13} and borrows
ideas from~\cite{JhSePi13,AhDuNe+14}. 
It is provably correct on expectation. 
We also  prove variance bounds, but \mgt{} shows much better performance in practice than such bounds would indicate. We perform detailed
experiments to prove that our algorithm gives accurate estimates
with little storage (less than 5\% of the stream in all instances). In \Fig{prev-multi}, we observe
how \mgt{} converges to the correct value storing at most 60K edges (the stream size is 3M).

\item {\bf Low storage required on real-world graphs:} Our algorithm stores less than 5\% of the stream
in all instances, and gives accurate estimates for transitivity and triangles counts.
For example, we converted a 223M edge {\tt orkut} graph~\cite{Snap} to a 500M edge multigraph, where  our algorithm produced triangles estimates within $1\%$ relative error. 
The storage required was just 1.2M edges, less than $0.5\%$ of the stream.
Our algorithm's worst performance (on a livejournal social network) only led to $0.04$ additive error in transitivity, and 8.7\% relative error in triangle count.

\item {\bf Multiple time window estimates in real-world graph streams:}  
\Fig{dblp-window} presents 
an example output of \mgt{} on a \DBLP{} coauthorship
graph stream.
%The graph stream has over 3 million multiedges and 2 million simple edges.
\mgt{} makes a single pass and stores less than 100K edges ($<\!3\%$ of total stream).
It gives estimates for transitivity and triangles count at every year for window sizes
of 5, 10, 15, 20 years, and all of time. In other words, at year (say) 2013,
it gives triangle estimates for the simple graphs that aggregates
edges in the following intervals: 2009--2013, 2004--2013, 1999--2013, 1994--2013,
and 1938--2013.
We immediately detect specific trends for different windows, like increasing window size
decreases transitivity (even though triangle count naturally goes up). Also note the overall
decrease of transitivity over time. We also perform such analyses on an email network
and a social network, and observe differences between these graphs.
\end{asparaitem}

\section{Effects of repeated edges on triangle counting} \label{sec:effect}

We describe previous practical streaming triangle algorithms and 
explain why repeated edges is a challenge. We hope that this provides better
context for our work and explains how important the assumption of simple graphs is
for previous work. Our focus is on the neighborhood sampler of Pavan et al.~\cite{PaTaTi+13},
the wedge sampler of Jha et al.~\cite{JhSePi13}, and the sample-and-hold algorithm of Ahmed et al.~\cite{AhDuNe+14}.
To the best of our knowledge, these are the algorithms with established practical performance
and good theoretical guarantees. (We omit the algorithm of Buriol et al.~\cite{BuFrLeMaSo06}, since its 
practical performance is not good even for million edge streams~\cite{JhSePi13}.)
For the sake of exposition, we formulate and describe the algorithms in slightly different
terms from the original papers.

\textbf{Reservoir sampling vs hashing:} All algorithms sample uniform random edges
from the stream, either by reservoir sampling or sampling an edge with fixed probability, which poses 
 a problem in multigraph streams, since frequent edges
have a higher probability of being sampled.  This problem  can be mitigated by using
random hash functions. 
Suppose we wish to store each edge of the underlying simple graph from the stream with probability $\alpha$. Each edge should be equally likely to be selected, independent of its frequency.
Let $hash$ be a uniform random function into the range $(0,1)$. 
When the algorithm sees an edge $e$ in the stream, it stores the edge
if $hash(e) < \alpha$. 
Observe that the probability that an edge is selected only depends on its hash value and \emph{is independent
of its frequency}. We also stress that, for simple graph streams, hash based sampling is essentially
equivalent to any other uniform random method.

Hashing provides an easy fix for the basic sampling problem, and is actually a convenient
implementation method even for simple graphs. (We implemented all previous algorithm using hashing.)
But the real challenge is debiasing, which comes next. 

\textbf{Neighborhood sampling~\cite{PaTaTi+13}:} Let edge $f$ be a neighbor of $e$, if $e$ and $f$ intersect. 
The main idea of~\cite{PaTaTi+13} is to pick a uniform random edge $e$, and then pick a uniform random neighbor $f$
of $e$ from the subsequent edges. This provides a wedge $\{e,f\}$, which is then checked for closure
to provide a triangle. This process samples triangles non-uniformly. Pavan et al. cleverly debias by counting the number of following edges adjacent to $e$. (Equivalently, keeping track of the degree of vertices after storing $e$.) The algorithm takes a number of independent samples
to get a low-error estimate. The method is provably correct and has excellent behavior in practice.

But multigraphs affect this debiasing. Tracking (simple) degrees of a vertex $v$ is a non-trivial task,
and requires counting the number of distinct edges incident to $v$. This itself
requires a space overhead and it is not clear how to get a complete small-space extension of this approach for multigraphs.

\textbf{Sample-and-hold~\cite{AhDuNe+14} and wedge sampling~\cite{JhSePi13}:} Ahmed et al. give an
elegant algorithm for triangle counting. Simply
store every edge with some fixed (small) probability. For every edge $e$ in the stream, count the number
of triangles formed by $e$ and a wedge among the stored edges. The sum of these counts
can be used to estimate the total number of triangles. The final algorithm is simple, converges extremely rapidly, and is space efficient
(To date, it is arguably the best streaming triangle counting algorithm).
The wedge sampling algorithm of Jha et al~\cite{JhSePi13} can also be thought
of in this framework, except that it tracks a subset of the wedges created by stored edges.
%These wedges are checked for closure with every edge of the stream, and this information is used
%to infer the final triangle count.

Without getting into details, it suffices to say that the correctness of these algorithms
hinges on a critical fact. Every triangle (in a simple graph) stream has a unique wedge
that closes in the future. Suppose edges $\{e,f,g\}$ form a triangle, and edges
appear in order $e,\ldots,f,\ldots,g$. Then the wedge $\{e,f\}$ is closed subsequently
by edge $g$. It can be shown that both algorithms sample triangles uniformly, leading to unbiased estimates.
This is not true for multigraph streams. If they appear
in the stream as $e,f,g,e,f,g,e,f,g,\ldots$, there is no unique wedge closed in the future. (Indeed, all
wedges are closed in the future.) This is a significant problem and increasing storage does not mitigate this problem.
As demonstrated in \Fig{prev-multi}, these algorithms converge to an incorrect estimate as storage increases.

\section{Proposed algorithm}
Our algorithm \estimate{} takes as input sampling rates $\alpha, \beta \in (0,1)$
and a window $\Delta t$. The window is specified as a fixed number of edges or
a fixed interval of time (like month, year, etc.).
We describe the data structures used by \estimate.
\begin{asparaitem}
	\item Lists \edgeres, \wedgeres: These are lists consisting of random edges and wedges, respectively. 
	The sizes of these lists are controlled by $\alpha$ and $\beta$.
	\item Flags $X_w$: For each wedge $w \in$ \wedgeres, we have a boolean flag $X_w$
	supposed to denote whether it is open or closed.
\end{asparaitem}
As mentioned earlier, it is convenient to think of $hash$ as a uniform
random function into the range $(0,1)$. 
Abusing notation, we will use $hash$ to map various different
objects\footnote{This is implemented by appropriately concatenating vertex ids.}
such as edges, wedges, etc.

\begin{algorithm2e}
\caption{\estimate($\alpha,\beta,\Delta t$)}\label{algo:estimate}
\DontPrintSemicolon
\nl\ForEach{edge $e_t$ in the stream}{
	\nl Call \update($e_t$). \;
	\nl	\ForEach{wedge $w$ in  \wedgeres}{\label{step:id-begin-bias-correction}
		\nl	Let $w = \set{(u,v), (u,w)}$. \;
		\nl	\If {$e_t$ is the closing edge $(v,w)$}{
		\nl	Set $X_w$ to $1$.\;
	 	}
		\nl	\ElseIf{$e_t \in \set{(u,v), (u,w)}$}{
		\nl	Reset $X_w$ to $0$. \tcp*[r]{bias-correction} \label{step:reset}
		}
	}
	\nl Let $\cW \subseteq$ \wedgeres{} be the set of wedges that formed in time $[t-\Delta t,t]$.\;
	\nl Output $\widehat{T}_t = (\alpha^2\beta)^{-1}\sum_{w \in \cW} X_w$.\;
	\nl Output $\widehat{W}_t = (\alpha^2\beta)^{-1}|\cW|$ and $\trans_t = 3\widehat{T}_t/\widehat{W}_t$
	(if $\widehat{W}_t = 0$, set $\trans_t = 0$).
}
\end{algorithm2e}

\begin{algorithm2e}
\caption{\update($e_t$)}\label{algo:update}
\DontPrintSemicolon
\nl \If {$hash(e_t) \leq \alpha$ and $e_t \notin$ \edgeres} {
	\nl Insert $e_t$ in \edgeres. \;
	\nl \ForEach {wedge $w = (e,e_t)$ where $e \in$ \edgeres} {
		\nl \If {$hash(w) \leq \beta$ and $w \notin$ \wedgeres{}} {
			\nl Insert $w$ in \wedgeres.\;
			}
		}
	%\nl Set $F_w$ to 0 for every wedge in \edgeres{} involving $e_t$.\;
	}	
\end{algorithm2e}

\subsection{High level description} 

The first step on encountering edge $e_t$
is to update the lists \edgeres{} and \wedgeres. This is done in procedure
\update. The idea is based on standard hash-based sampling. We add $e_t$
to \edgeres{} if $hash(e_t) \leq \alpha$ and $e_t$ is not already in \edgeres.
Then, we look at all the wedges that $e_t$ creates with existing
edges in \edgeres. We apply another round of hash-based sampling 
to put these wedges in \wedgeres. 

%\vspace{-5pt}
Critically, if an edge $e$ enters \edgeres, it never
leaves. If $e$ enters \edgeres{}, it does so the first time it appears in the stream.
The probability of an edge entering \edgeres{} is independent of its frequency
in the stream. This is vital to get unbiased samples of edges in the underlying simple graph $G_t$.
Similar statements hold for wedges.

\textbf{Checking for closures and debiasing:} 
%This is the step where we actually look
%for triangles. 
We encounter edge $e_t$ and have updated \edgeres{} and \wedgeres.
For each wedge $w \in$ \wedgeres, we have a boolean variable $X_w$. If $e_t$
closes $w$ (so $w$ and $e_t$ form a triangle), we set $X_w \!=\! 1$. This is 
the standard wedge-sampling approach~\cite{ScWa05-2,SePiKo13,JhSePi13}. At this point,
the algorithm would basically be that of~\cite{JhSePi13}, implemented with hash-based
sampling. As argued earlier and shown in~\Fig{prev-works}, this algorithm does not
work.

To fix the biasing, we perform a somewhat mysterious step. 
We have wedge $w \in$ \wedgeres{} and encounter $e_t$. If $e_t$ is already
part of $w$, we simply reset $X_w$ to $0$. So even though $w$ may be closed,
we just assume it is open. This completely resolves the biasing, and we give
a formal proof in \Thm{main}.

\textbf{Outputting the estimate:} Finally, we need to output estimates,$|\widehat{T}_t|, |\widehat{W}_t|, \widehat{\trans}_t$ for $|T_t|, |W_t|, \trans_t$, respectively.
This is the only step where the time window $\Delta t$ is used. We look at all wedges
in \wedgeres{} that formed in the time $[t-\Delta t,t]$. The total number of these
wedges can be scaled to estimate $|W_t|$. The number of these wedges, where $X_w = 1$
is scaled to estimate $|T_t|$, and the appropriate ratio estimates $\trans_t$.

\subsection{Theoretical analysis} \label{sec:theory}

We prove that the \estimate{} is correct on expectation
and prove weak concentration results bounding the variance.
We also show some basic bounds on the storage of \estimate.
%A key fact is the uniform sampling of edges and wedges. 
Throughout this section, we focus at some time $t$ and the simple graph $G_t$. 
We stress that there is no distributional assumption
on the graph or the stream. All the probabilities are over the internal randomness
of the algorithm (which is encapsulated in the random behavior of $hash$).

%\begin{lemma} 
\begin{restatable}{lem}{problemma}
\label{lem:prob} Consider time $t$. For any edge $e \in G_t$,
the probability that $e \in$ \edgeres{} is $\alpha$.
For any wedge $w \in W_t$, the probability
that $w \in$ \wedgeres{} is  $\alpha^2\beta$.
\end{restatable}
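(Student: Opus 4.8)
The plan is to reduce membership in each list to a conjunction of independent hash tests, exploiting that $hash$ assigns every distinct object (edge or wedge) an independent uniform value in $(0,1)$, so that the probabilities factor cleanly and, crucially, do not depend on how often an edge or wedge recurs in the stream.

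First I would treat the edge list. Fix an edge $e \in G_t$. Reading off \update, the only insertion of $e$ into \edgeres{} happens at its first appearance, guarded by $hash(e) \leq \alpha$; the test $e \notin$ \edgeres{} merely prevents re-insertion. Since $hash(e)$ is a fixed value depending on $e$ alone, the position of $e$ in the stream is irrelevant: if $hash(e) \leq \alpha$ then $e$ enters \edgeres{} at its first appearance---which occurs no later than $t$ because $e \in G_t$---and, because an edge never leaves the list, is still present at time $t$; otherwise $e$ never enters. Hence $e \in$ \edgeres{} if and only if $hash(e) \leq \alpha$, an event of probability $\alpha$. This is exactly the frequency-independence we want.

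Next I would treat the wedge list. Write $w = \{e_1, e_2\}$ and let $e_2$ be the edge of $w$ whose first appearance in the stream is later (no tie is possible since edges arrive one at a time). I claim that $w \in$ \wedgeres{} if and only if $hash(e_1) \leq \alpha$, $hash(e_2) \leq \alpha$, and $hash(w) \leq \beta$ all hold. For the forward direction, the inner loop of \update{} forms a wedge only from a pair of edges both already in \edgeres, so $w \in$ \wedgeres{} forces $e_1, e_2 \in$ \edgeres{} and, via its guard, $hash(w) \leq \beta$. For the converse, suppose all three tests pass. By the edge analysis both edges enter \edgeres{} at their first appearances; consider the step where $e_2$ is first inserted. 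At that moment $e_1$ is already present (it appeared earlier and never leaves), so the wedge $(e_1, e_2) = w$ is examined, and since $hash(w) \leq \beta$ and $w$ is not yet in \wedgeres{} (this is the first instant at which both its edges coexist in \edgeres), it is inserted and thereafter remains. Because $e_1, e_2, w$ are three distinct inputs to $hash$, the three events are mutually independent, giving probability $\alpha \cdot \alpha \cdot \beta = \alpha^2\beta$.

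The hard part will be the converse direction for wedges: certifying that $w$ is genuinely created at the instant $e_2$ is inserted, not merely that it becomes eligible. This rests on two facts that deserve explicit statement---that $e_1$ persists in \edgeres{} from its earlier insertion, so the inner loop actually reaches it, and that $w$ was not created at any earlier time, so the insertion is neither blocked by the ``$w \notin$ \wedgeres'' guard nor double-counted. Once the persistence of \edgeres{} is invoked, the rest is routine stream bookkeeping with no distributional content.
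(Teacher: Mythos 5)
Your proposal is correct and follows essentially the same route as the paper's own proof: characterize membership in \edgeres{} and \wedgeres{} as a deterministic, frequency-independent event determined by the hash values (with the wedge inserted exactly when the later-arriving edge first enters \edgeres), then multiply the independent probabilities to get $\alpha$ and $\alpha^2\beta$. Your write-up is somewhat more explicit about the ``if and only if'' characterization and the independence of the three hash tests, but the underlying argument is identical.
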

%\end{lemma}

\begin{proof} Consider edge $e$. We first argue that  $e\! \in$\! \edgeres{} iff $hash(e) \!\leq\! \alpha$
(Note that this is independent of  the frequency of $e$).
Suppose $hash(e) \!\leq \!\alpha$. At its first  occurrence,  $e$ enters \edgeres\ and remains in \edgeres.
Suppose $hash(e) \! >\!  \alpha$. At no timestep will $e$ be added to \edgeres, regardless of how many times
it appears.
From the randomness of $hash$, $hash(e) \!\leq\! \alpha$ with probability $\alpha$. Hence, $e \in$
\edgeres{} with probability $\alpha$.

%Its first occurrence is at some time $s \leq t$.
%At time $s$, $e$ enters \edgeres{} iff $hash(e) \leq \alpha$. 
%
For wedge $w\! =\! \{e,e'\}$
to be in \wedgeres, both its edges must be in \edgeres.
That means both $hash(e)$ and $hash(e')$ are at most $\alpha$.
Suppose the first occurrence of $e$
is before that of $e'$. At the first time $e'$ occurs, procedure \update{}
will add $w$  to \wedgeres\ iff $hash(w) \leq \beta$.
At any subsequent occurrence of $e$ or $e'$, the wedge $w$ is not considered
for adding to \wedgeres{} (simply because $e$ and $e'$ are already in \edgeres).
The total probability (by the randomness of $hash$) is $\alpha^2\beta$.
\qed
\end{proof}

The following hold just by linearity of expectation.
We move proofs to the \elsewhere.
%
% leads to a bound  on storage
%(which depends on the entire graph, not just $G_t$).

\begin{restatable}{thm}{spacethm}
\label{thm:space}
The expected size of \edgeres{} is  $\alpha E(G[1,t])$ and the expected size of \wedgeres{} is $\alpha^2\beta W(G[1,t])$.
\end{restatable}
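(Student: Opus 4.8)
The plan is to write each data-structure size as a sum of indicator random variables, one per distinct element of the underlying simple graph $G[1,t]$, and then invoke linearity of expectation with the per-element inclusion probabilities already established in \Lem{prob}. The observation that makes this legitimate is that neither \edgeres{} nor \wedgeres{} depends on the window $\Delta t$: the window is consulted only when forming $\cW$ and reporting the estimate, so the two lists are exactly the sampled edges and wedges of the whole prefix graph $G[1,t]$. Consequently the probability bounds of \Lem{prob}, though stated for $G_t$, hold verbatim for $G[1,t]$ (take the window to be all of $[1,t]$), since the proof of that lemma only reasons about whether a fixed edge or wedge ever enters the corresponding list and never uses the window boundary.

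For the edge list, I would index the sum by the distinct edges of $G[1,t]$. Since an edge enters \edgeres{} exactly once, at its first occurrence, and never leaves, the elements of \edgeres{} are in bijection with the distinct edges $e$ of $G[1,t]$ satisfying $hash(e) \le \alpha$. Writing $|\edgeres| = \sum_{e \in E(G[1,t])} \mathbf{1}[e \in \edgeres]$ and taking expectations, linearity gives $\E[|\edgeres|] = \sum_{e} \Pr[e \in \edgeres] = \alpha\, E(G[1,t])$ by the first part of \Lem{prob}.

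The wedge list is handled identically, once I confirm that every element of \wedgeres{} is a genuine wedge of $G[1,t]$ and that no wedge is inserted twice. By construction in \update, a wedge $(e,e_t)$ is formed only from two edges already resident in \edgeres{} (hence two distinct edges of $G[1,t]$), and the guard $w \notin \wedgeres$ prevents duplicate insertion; so the membership indicators are indexed exactly by $W(G[1,t])$. Then $|\wedgeres| = \sum_{w \in W(G[1,t])} \mathbf{1}[w \in \wedgeres]$, and the second part of \Lem{prob} together with linearity of expectation yields $\E[|\wedgeres|] = \alpha^2\beta\, W(G[1,t])$.

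The step that requires the most care --- though it is conceptual rather than computational --- is the reduction of the window-indexed lemma to the whole-prefix graph: arguing that $|\edgeres|$ and $|\wedgeres|$ really count distinct elements of $G[1,t]$, with no double counting and no dependence on $\Delta t$. Once that bookkeeping is pinned down, the result is immediate from \Lem{prob} and linearity of expectation, with no concentration or variance argument needed.
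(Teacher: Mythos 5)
Your proof is correct and takes essentially the same route as the paper's: write each list size as a sum of indicator variables over the distinct edges (resp.\ wedges) of $G[1,t]$, apply \Lem{prob} for the per-element probabilities, and finish by linearity of expectation. The one place you are more careful than the paper --- noting that \Lem{prob} is stated for the windowed graph $G_t$ but its proof never uses the window, so it applies verbatim to $G[1,t]$ --- is a point the paper's proof uses silently, so your extra bookkeeping is a welcome clarification rather than a divergence.
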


%Proof is moved to the Appendix due to space limits. 
%
%\begin{proof} The storage is dominated by the sizes of the lists \edgeres{} and \wedgeres.
%For each edge $e$, let $Z_e$ be the indicator for $e$ being in \edgeres{} at time $t$.
%The expected size of \edgeres{} is $\E[\sum_{e \in E(G[1,t])} Z_e]$. By the previous
%claim $\E[Z_e] = \alpha$, and linearity of expectation completes the proof.
%An identical argument holds for \wedgeres.
%\qed
%\end{proof}
%
%A similar argument shows $\widehat{W}_t$ is correct on expectation.

%\begin{theorem}
\begin{restatable}{thm}{wedgethm}
\label{thm:wedge} $\E[\widehat{W}_t] = |W_t|$.
\end{restatable}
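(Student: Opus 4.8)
The plan is to treat $\widehat{W}_t = (\alpha^2\beta)^{-1}|\cW|$ as a scaled count of sampled wedges and to reduce the claim to \Lem{prob} via linearity of expectation. Concretely, I would write $|\cW|$ as a sum of indicator random variables, one for each wedge of the windowed simple graph, and observe that the scaling factor $(\alpha^2\beta)^{-1}$ is exactly the reciprocal of the per-wedge sampling probability supplied by \Lem{prob}. No independence is needed, so the whole argument rests on one combinatorial identification plus one application of linearity.

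The first and only substantive step is to pin down the set $\cW$ exactly. I would prove the identity $\cW = \{\, w \in \text{\wedgeres} : w \in W_t \,\}$, i.e.\ that the wedges the algorithm flags as ``formed in $[t-\Delta t,t]$'' are precisely the members of $W_t$ that happened to be sampled into \wedgeres. This requires two inclusions: \emph{soundness}, that the window filter never admits a wedge that is not a genuine wedge of $G_t$, and \emph{completeness}, that it never discards a sampled wedge that does belong to $G_t$. Here I would use the structural fact established in the proof of \Lem{prob}: a wedge enters \wedgeres{} at most once, at the first timestep at which both of its edges have occurred, so its recorded formation time is well defined and can be compared against the window $[t-\Delta t,t]$ consistently.

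Granting that identity, define for each $w \in W_t$ the indicator $Y_w = \mathbf{1}[w \in \cW]$. By the identity, $Y_w = 1$ iff $w \in \text{\wedgeres}$, so \Lem{prob} gives $\Pr[Y_w = 1] = \alpha^2\beta$ for every $w \in W_t$. Writing $|\cW| = \sum_{w \in W_t} Y_w$ and applying linearity of expectation yields $\E[|\cW|] = \alpha^2\beta\,|W_t|$, whence $\E[\widehat{W}_t] = (\alpha^2\beta)^{-1}\E[|\cW|] = |W_t|$, as required.

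The expectation calculation itself is trivial; the main obstacle is entirely in the windowing correspondence of the second paragraph. \Lem{prob} controls membership in \wedgeres{} over the \emph{entire} history, whereas $\cW$ additionally restricts to wedges formed inside the current window, and I must check that this restriction neither drops a sampled wedge of $W_t$ nor retains a sampled wedge outside $W_t$. The delicate case is an edge whose first occurrence predates $t-\Delta t$ but which reappears inside the window: such an edge is present in $G_t$ (since $G_t$ is the deduplication of all edges in $[t-\Delta t,t]$), yet any wedge built on it was formed before the window. Reconciling the algorithm's formation-time test with the edge-presence definition of $G_t$ in this case is the one point I would have to argue carefully to make the identity---and hence the theorem---go through.
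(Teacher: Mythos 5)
Your expectation argument is exactly the paper's proof: define, for each $w \in W_t$, the indicator $Y_w$ of the event $w \in \mbox{\wedgeres}$, note $\widehat{W}_t = (\alpha^2\beta)^{-1}\sum_{w \in W_t} Y_w$, invoke \Lem{prob} to get $\E[Y_w] = \alpha^2\beta$, and finish by linearity. So the core of your proposal coincides with the paper's, which in fact asserts the identity $\cW = \{w \in W_t : w \in \mbox{\wedgeres}\}$ silently; you are right that this identification is the only substantive content beyond \Lem{prob}.

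Where your write-up has a genuine problem is the tool you propose for establishing that identity. You suggest using the fact, from the proof of \Lem{prob}, that a wedge enters \wedgeres{} exactly once, at the \emph{first} time both its edges have occurred, and comparing that recorded insertion time against $[t-\Delta t, t]$. That is precisely the wrong test, and it fails on the delicate case you yourself name: a wedge both of whose edges first occurred before $t-\Delta t$ but recur inside the window belongs to $W_t$, yet its insertion time predates the window, so it would be dropped from $\cW$; the estimator would then undercount and the theorem would be false whenever such wedges exist. The paper resolves this with different bookkeeping, stated in \Sec{implement}: the algorithm maintains, for every stored edge, its \emph{latest} timestamp (no history), and a wedge in \wedgeres{} is deemed to have ``formed in $[t-\Delta t,t]$'' iff both of its edges have latest timestamp at least $t-\Delta t$. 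Since an edge's latest timestamp lies in the window exactly when the edge occurs in $[t-\Delta t,t]$, i.e.\ belongs to $G_t$, this test places $w$ in $\cW$ iff $w \in \mbox{\wedgeres}$ and both edges of $w$ are edges of $G_t$, which for two edges sharing a vertex is precisely $w \in W_t$. With that reading your identity holds verbatim, the delicate case dissolves, and the rest of your argument closes the proof.
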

%\end{theorem}

%Proof is moved to the Appendix due to space limits. 
%
%\begin{proof} For any wedge $w \in W_t$, let $Y_w = 1$
%if $w \in$ \wedgeres{} and $0$ otherwise. Note that $\widehat{W}_t = (\alpha^2\beta)^{-1} \sum_{w \in W_t} Y_w$.
%We have $\E[Y_w] = \alpha^2\beta$ by \Lem{prob}. By linearity of expectation, 
%\begin{eqnarray*}
%\E[\widehat{W}_t] & = & (\alpha^2\beta)^{-1}\E[\sum_{w \in W_t} Y_w] \\
%& = & (\alpha^2\beta)^{-1}\sum_{w \in W_t} \E[ Y_w] = |W_t| 
%\end{eqnarray*}
%\qed
%\end{proof}
%
Now we come to a key theorem that shows that $\widehat{T}_t$
is correct on expectation. This is where we prove that our proposed debiasing technique works.

%\begin{theorem} 
\begin{restatable}{thm}{mainthm}
\label{thm:main} $\E[\widehat{T}_t] = |T_t|$.
\end{restatable}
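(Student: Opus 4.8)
The plan is to prove the statement by linearity of expectation, after first establishing a purely \emph{deterministic} fact about the final values of the flags $X_w$. Writing $\widehat{T}_t = (\alpha^2\beta)^{-1}\sum_{w \in \cW} X_w$, I will show that $\E\big[\sum_{w\in\cW} X_w\big] = \alpha^2\beta\,|T_t|$, which gives the claim after scaling. The randomness lives entirely in the hash values, which determine \edgeres{} and \wedgeres; once these are fixed, the whole run of the closure/reset loop is a deterministic function of the (fixed) stream, so I can separate the combinatorial accounting from the probability.

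The deterministic core concerns a single triangle $\Delta$ of $G_t$ with edges $a,b,c$; it has (at most) three wedges, $w_a,w_b,w_c$, where $w_x$ is closed by edge $x$ and consists of the other two edges. By the \textbf{ElseIf} structure of the closure/reset loop, the final value of any tracked flag $X_w$ at time $t$ equals $1$ exactly when the last stream event after $w$ was formed that touches one of the three triangle edges is the appearance of $w$'s closing edge, and equals $0$ when that last event touches one of $w$'s own edges (or when no such event occurred). Since no two distinct edges appear at the same timestep, among $a,b,c$ there is a \emph{unique} last-occurring edge in $[1,t]$, and for a triangle of $G_t$ this last occurrence lies inside the window. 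Say it is $b$. Then for $w_b$ (which does not contain $b$ and is formed strictly before $b$'s last appearance) the last relevant event is the closing-edge appearance, forcing $X_{w_b}=1$; for $w_a$ and $w_c$, edge $b$ is one of their own edges, and its last appearance is a reset never subsequently overridden, forcing $X_{w_a}=X_{w_c}=0$. Hence each triangle of $G_t$ has \emph{exactly one} lit wedge, the one closed by its last-occurring edge, call it $w^\ast(\Delta)$, and $w^\ast(\Delta)$ is determined by the stream alone, independently of the hashes. One corner case deserves attention: when $b$'s last appearance is the very event that forms $w_a$ or $w_c$; since \update{} runs before the closure/reset loop in \estimate, the reset still fires in that timestep and the value is settled correctly.

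Given this, the sum collapses over triangles: $\sum_{w\in\cW} X_w = \sum_{\Delta \in T_t}\mathbf 1[\,w^\ast(\Delta)\in\cW\,]$, because a tracked wedge is lit iff it is the designated wedge of the unique triangle obtained by adding its closing edge, and (since the closing event that sets the flag must occur after the wedge formed in the window) that triangle lies in $G_t$. The map $\Delta \mapsto w^\ast(\Delta)$ is injective — a wedge has a unique closing edge and hence determines its triangle — so there is no double counting. Taking expectations and invoking \Lem{prob} together with the identification of $\cW$ with the wedges of $G_t$ that underlies \Thm{wedge}, each designated wedge satisfies $\Pr[w^\ast(\Delta)\in\cW]=\alpha^2\beta$; note crucially that this probability is \emph{independent} of whether the closing edge $b$ is sampled into \edgeres, because closure detection in \estimate{} tests the identity of $e_t$ rather than its membership. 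Therefore $\E\big[\sum_{w\in\cW}X_w\big]=\alpha^2\beta\,|T_t|$ and $\E[\widehat{T}_t]=|T_t|$.

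The main obstacle is the deterministic flag analysis — proving that the mysterious reset step makes each triangle contribute exactly one lit wedge. The delicate points are: (i) arguing that only the \emph{last} relevant event fixes the final flag, so that set/reset events before the window are harmlessly overridden; (ii) the coincident-timestep case above; and (iii) ensuring that ``formed in $[t-\Delta t,t]$'' makes $\Pr[w^\ast(\Delta)\in\cW]$ exactly $\alpha^2\beta$ for every triangle of $G_t$, which I would inherit from the windowing already validated in \Thm{wedge}. Everything else is bookkeeping once the one-lit-wedge-per-triangle structure is in hand.
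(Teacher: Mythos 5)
Your proposal is correct and takes essentially the same route as the paper: the paper's proof also extends $X_w$ to all of $W_t$, identifies for each triangle its unique last-occurring edge (via $t_{max}$), and proves a debiasing lemma stating that the reset step forces the two wedges containing that edge to end at $0$ while the remaining wedge ends at $1$ exactly when it lies in \wedgeres, after which \Lem{prob} and linearity of expectation finish the argument. Your ``one lit wedge per triangle'' deterministic core is precisely the paper's Lemma~\ref{lem:debias}, including the observation that the reset at the last occurrence is never subsequently overridden.
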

%\end{theorem}

\begin{proof} We extend the definition of Boolean flag $X_w$ to every wedge $w$ in $W_t$. 
Let $X_w\!=\!0$ if $w$ is not present in \wedgeres{} (at time $t$). 
Note that $\widehat{T_t} = (\alpha^2\beta)^{-1}\sum_{w \in W_t} X_w$.
For every edge $e$ in $E_t$, 
let $t_{max}(e)$ be the maximum time $s \!\leq \!t$ such that $e_s \!= \!e$. 
Fix a triangle $A \!= \!\set{a,b,c} \!\in \!T_t$ formed by edges $a, b,$ and $c$, and 
assume (by relabeling if required) that $c$ is the last edge to appear in the stream among $a, b$, and $c$. 
In other words, $t_{max}(c) > \max\set{t_{max}(a), t_{max}(b)}$. Since $\{a,b\}, \{b,c\}, \{c,a\}$ are wedges, it makes
sense to talk about $X_{\{a,b\}}$, etc.
The following is the debiasing argument, showing that exactly
one wedge in $A$ has $X_w = 1$.

\begin{lemma}
\label{lem:debias}
 $X_{\{b,c\}} = X_{\{c,a\}} = 0$. Moreover, $X_{\{a,b\}} = 1$ iff $\{a,b\}$ is in \wedgeres.
\end{lemma}

\begin{proof}
Consider the moment $s = t_{max}(c)$ when $e_s = c$. 
If wedge $\{b,c\} \not \in$ \wedgeres, then by definition, $X_{\set{b,c}}$ is $0$. 
If  $\{b,c\}\in$ \wedgeres, then by Step~\ref{step:reset} of Algorithm~\ref{algo:estimate}, the value of $X_{\{b,c\}}$ is reset to $0$. No subsequent change is made to this value. An identical argument shows the same for $X_{\{c,a\}}$. 
Finally, $X_{\{a,b\}}$ is set to 1 at this moment iff if wedge $\{a,b\}$ is in \edgeres, and once again, this value is not changed subsequently.
\qed
\end{proof}

By Lemma~\ref{lem:debias}, $\E[X_{\{b,c\}}]  = \E[X_{\{c,a\}}] = 0$, while $\E[X_{\{a,b\}}]$
is the probability that this wedge is in \wedgeres. This is exactly $\alpha^2\beta$.
Therefore, the sum of expectations of $X_w$ over all three wedges $w$ of the triangle $A = \set{a,b,c}$ is $\sum_{w \in A} \E[X_w] = \alpha^2\beta$. 
Observe this is true for any fixed triangle in $T_t$.
For any wedge $w$ that does not participate in a triangle,
$X_w$ is obviously zero. By linearity of expectation, 
$\E[\widehat{T}_t] = (\alpha^2\beta)^{-1} \E[\sum_{w \in W_t} X_w] $ $= (\alpha^2\beta)^{-1} \sum_{A \in T_t} \sum_{w \in A} \E[X_w]$.
Plugging in the value of $\E[X_w]$, this is $(\alpha^2\beta)^{-1} \cdot \alpha^2\beta |T_t| = |T_t|$.
%\begin{eqnarray*}
%	 \\
%	& = & (\alpha^2\beta)^{-1} \cdot \alpha^2\beta |T_t| = |T_t| \\[-8ex]
%\end{eqnarray*}
\qed
\end{proof}

Using methods from~\cite{JhSePi13}, we can prove weak concentration bounds for
 $\widehat{T}_t$ and $\widehat{W}_t$ (by bounding their variance). 
We need to assume that  $\alpha$ and $\beta$ are large enough to ensure that enough
wedges of $W_t$ are in \wedgeres, and there are at least as many wedges
in $G_t$ as edges. The latter is needed to rule out extreme cases
like $G_t$ being a path or a matching. This assumption is reasonable
for real-world networks, as can be seen in~\Tab{run-synthetic}. Proof 
is in the \elsewhere.%appendix.

%\begin{theorem} 
\begin{restatable}{thm}{concthm}
\label{thm:conc} Fix some sufficiently small $\gamma > 0$.
Suppose that $(\alpha^2\beta)|W_t|$ (the expected number of wedges in $W_t$
that are in \wedgeres) is at least $1/\gamma^6$. Furthermore
$|W_t| \geq |E_t|$ (there are at least as many wedges in $G_t$ as edges).
Then, $\Pr[|\widehat{W}_t - |W_t|| > \gamma|W_t|] < \gamma$, 
$\Pr[|\widehat{T}_t - |T_t|| > \gamma|W_t|] < \gamma$, and
$Pr[|\widehat{\trans}_t - \trans_t| > 8\gamma] < 4\gamma$.
\end{restatable}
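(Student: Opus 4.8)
The plan is to obtain concentration for $\widehat{W}_t$ and $\widehat{T}_t$ separately via Chebyshev's inequality, and then transfer these bounds to the ratio $\widehat{\trans}_t = 3\widehat{T}_t/\widehat{W}_t$ by a union bound and elementary error propagation. I would write $\widehat{W}_t = (\alpha^2\beta)^{-1}\sum_{w \in W_t} Y_w$, where $Y_w$ is the indicator that $w \in$ \wedgeres, and (using the debiasing argument in the proof of \Thm{main}) $\widehat{T}_t = (\alpha^2\beta)^{-1}\sum_{A \in T_t} Z_A$, where $Z_A$ indicates that the unique surviving wedge of triangle $A$ lies in \wedgeres. Since \Thm{wedge} and \Thm{main} already give $\E[\widehat{W}_t] = |W_t|$ and $\E[\widehat{T}_t] = |T_t|$, only the variances remain.

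For the variance I would compute the second moment directly from the hash model. Each indicator is determined by the hash values of its two edges and of the wedge itself, and by \Lem{prob} has mean $\alpha^2\beta$. The key observation is that two such indicators are independent unless the corresponding wedges share an edge, since distinct edges and distinct wedges receive independent hash values. Hence $\mathrm{Var}(\widehat{W}_t)$ splits into a diagonal term, bounded by $(\alpha^2\beta)^{-1}|W_t|$, and a cross term proportional to $\alpha^{-1}$ times the number $P$ of pairs of wedges of $W_t$ that share an edge (and analogously $\mathrm{Var}(\widehat{T}_t)$, with a count $Q$ over triangles). The diagonal term is immediately controlled: the hypothesis $(\alpha^2\beta)|W_t| \ge 1/\gamma^6$ gives $(\alpha^2\beta)^{-1}|W_t| \le \gamma^6|W_t|^2$, comfortably below the threshold $\gamma^3|W_t|^2$ that Chebyshev needs to make the failure probability at most $\gamma$.

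With the variance bounds in place, Chebyshev yields $\Pr[|\widehat{W}_t - |W_t|| > \gamma|W_t|] < \gamma$ and $\Pr[|\widehat{T}_t - |T_t|| > \gamma|W_t|] < \gamma$. For the transitivity I would condition on both good events, which hold simultaneously with probability at least $1 - 2\gamma$. Writing $\widehat{W}_t = |W_t| + \delta_W$ and $\widehat{T}_t = |T_t| + \delta_T$ with $|\delta_W|, |\delta_T| \le \gamma|W_t|$, a short computation gives $|\widehat{\trans}_t - \trans_t| = 3\,|\delta_T/\widehat{W}_t - |T_t|\delta_W/(\widehat{W}_t|W_t|)|$; bounding $\widehat{W}_t \ge (1-\gamma)|W_t|$ and using $|T_t|/|W_t| = \trans_t/3 \le 1/3$ produces $|\widehat{\trans}_t - \trans_t| \le 4\gamma/(1-\gamma) \le 8\gamma$ for small $\gamma$, with the $2\gamma$ failure probability well inside the claimed $4\gamma$.

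The main obstacle is the cross (covariance) term: I must show that $\alpha^{-1}P$ and $\alpha^{-1}Q$ are $o(\gamma^3|W_t|^2)$, i.e.\ that the positive correlations introduced by edge-sharing wedges do not dominate the variance. This is exactly where the hypothesis $|W_t| \ge |E_t|$ (which rules out degenerate sparse instances such as paths or matchings) and the variance-bounding technique of \cite{JhSePi13} enter: the pair counts $P$ and $Q$ have to be bounded in terms of $|W_t|^2$ through the degree/wedge structure of $G_t$, rather than naively, and checking this bound carefully under the stated assumptions is the delicate step. Everything else reduces to linearity of expectation and routine algebra.
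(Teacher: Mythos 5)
Your proposal follows the paper's proof architecture essentially step for step: indicators $Y_w$ with mean $\alpha^2\beta$, Chebyshev's inequality, a split of the second moment into a diagonal term, a disjoint-pair term (which cancels against the squared mean by independence of hash values), and an edge-sharing cross term; then a union bound over the two good events and elementary error propagation for $\widehat{\trans}_t = 3\widehat{T}_t/\widehat{W}_t$ (your $4\gamma/(1-\gamma) \leq 8\gamma$ computation with failure probability $2\gamma \leq 4\gamma$ is exactly the paper's \Thm{trans}). However, you explicitly defer the one step that carries all the technical content: bounding the cross term $\alpha^{-1}P$, where $P$ is the number of pairs of wedges sharing an edge. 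Saying that $P$ ``has to be bounded in terms of $|W_t|^2$ through the degree/wedge structure'' and that this is ``the delicate step'' is naming the gap, not closing it, so as written this is not a proof. The paper closes it in \Lem{var} with a short chain you should reproduce: each edge-sharing pair $\{w,x\}$ satisfies $\E[Y_wY_x] = \alpha^3\beta^2$ (three edges and two wedges must all hash below their thresholds); the number of such pairs is controlled by $\sum_i \binom{d_i}{3} \leq \sum_i d_i^3$, where $d_i$ are the degrees in $G_t$; since the $\ell_3$-norm is at most the $\ell_2$-norm, $\sum_i d_i^3 \leq \bigl(\sum_i d_i^2\bigr)^{3/2}$; and $\sum_i d_i^2 = 2(|W_t| + |E_t|) \leq 4|W_t|$, which is precisely where the hypothesis $|W_t| \geq |E_t|$ enters. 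After rescaling by $(\alpha^2\beta)^{-2}$ this yields $\mathrm{Var}[\widehat{W}_t] \leq (\alpha^2\beta)^{-1}|W_t| + 8\alpha^{-1}|W_t|^{3/2}$, and an identical argument handles $\widehat{T}_t$ (your reformulation as a sum of $Z_A$ over triangles is an equivalent, equally valid packaging, given \Lem{debias}).

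A second, smaller point: your stated target for the cross term, $o(\gamma^3|W_t|^2)$, is not achievable from the hypotheses and is also not needed. The assumption $\alpha^2\beta|W_t| \geq \gamma^{-6}$ together with $\beta \leq 1$ gives $\alpha|W_t|^{1/2} \geq \gamma^{-3}$, so the cross term is bounded by $8\alpha^{-1}|W_t|^{3/2} \leq 8\gamma^3|W_t|^2$ --- a constant factor \emph{above} the Chebyshev threshold $\gamma^3|W_t|^2$, so the failure probability one actually obtains is of order $8\gamma$, not $\gamma$. The paper's own proof has the same slack (its arithmetic yields $\gamma^4 + 8\gamma$ where the statement claims $\gamma$), so the theorem's constants should be read as loose. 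The practical consequence for you: the right goal is a bound of the form $O(\gamma^3|W_t|^2)$ via the $\ell_3$-versus-$\ell_2$ degree argument above, and with that inserted your outline becomes the paper's proof.
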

%\end{theorem}

%Using these, we can prove bounds on $\widehat{\trans}_t$.

\section{Empirical evaluation of \estimate}

We implemented our algorithm in C++ and ran it on a MacBook Air laptop with 1.7 GHz Intel Core i7 processor and 8 GB 1600 MHz DDR3 RAM. We applied \estimate{} on a variety of real-world datasets.  Refer to \Tab{run-synthetic} for details about these datasets.\\
{\bf \DBLP:}   This is a co-authorship network for papers on the DBLP website.  From  the raw data at DBLP~\cite{DBLP} we  extracted 786,719 papers by ignoring papers with (i) a single author, (ii) more than 100 authors, and (iii) missing ``year'' metadata. For each paper we put an edge corresponding to every distinct pair of co-authors resulting in a total of 3,630,374 (multi)edges. \\
{ \bf  \enron:} This network is derived from emails between Enron employees between 1999 and 2003~\cite{graphrepository2013}. Nodes correspond to employees while edges represent  their email correspondence. Multiple emails between the same pair of individuals result in a multigraph. \\
 {\bf \flickr:} This dataset consists of friendship connections of users of Flickr, obtained from \cite{graphrepository2013}. Originally, the data was collected in \cite{mislove-2008-flickr}. (Results on \flickr{} given in the \elsewhere.) \\
{\bf SNAP:}  We extended our data set to include networks from SNAP \cite{Snap}. We synthetically replicate edges of these datasets to get a multigraph.
	
\begin{table*}[t]	
 \caption{A run of our algorithm on a variety of real-world and synthetic graphs with $\alpha = 0.01$ and $\beta$ set such that size of \wedgeres{} is at most 50K. The third column gives the number of edges in the multigraph while the fourth column (Space) gives the space (in terms of number of edges) used by the algorithm. The first three datasets are raw real-world datasets whereas the remaining datasets were synthetically made multigraphs starting with graphs from \cite{Snap}.}
\label{tab:run-synthetic}
\centering
\scriptsize
  \begin{tabular}{|l|c|c|cc|c|cc|cc|}
  \hline
\multicolumn{1}{|c|}{Dataset } & \multicolumn{1}{c|}{$n$} & \multicolumn{1}{c|}{Wedges} & \multicolumn{2}{c|}{Edges}  &  \multicolumn{1}{c|}{ Space} & \multicolumn{2}{c|}{Transitivity} & \multicolumn{2}{c|}{Triangles} \\  
 &  & (simple) & simple & multi &  &   exact &  estimate &  exact & Rel. error \\ \hline
\DBLP 			&  755K	&  61M 		&  2.54M		&  3.63M 		 & 31K 	& 0.269 	& 0.282	&5.50M	& 3.09\%\\
\enron 			& 86K	&  49M 		& 297K 		&  1.15M		&  8K 	& 0.069 	& 0.071  	& 1.18M 	&3.38\% \\ 
\flickr        			& 2302K	&  22B    	& 22M		&  33.1M            &  251K    & 0.110  	& 0.108     &837M	& 1.24\%\\  \hline
as-skitter 			& 1.6M	&  16B 		& 11M 		& 53M 		&  160K 	& 0.005	& 0.005 	&28M 	&6.50\%  \\
cit-Patents 		& 3.7M	&  0.3B 		& 16M		& 79M		&  199K  	& 0.067	& 0.066 	&7.51M 	&0.33\%  \\
web-Google  		& 0.8M	&  0.7B 		& 4M 		& 20M 		&  79K  	& 0.055 	& 0.057	& 13.3M	&3.79\% \\
web-NotreDame 	& 0.3M	&  0.3B 		& 1M 		& 5M 		&  42K 	& 0.088 	& 0.088	& 8.91M	&3.93\%  \\
youtube 			& 1.1M	&  1.4B 		& 2M 		& 14M 		&  64K 	& 0.006 	& 0.006 	&3.05M 	&1.86\%  \\
livejournal 		& 5.2M	&  7.5B 		& 48M 		& 205M 		&  473K 	& 0.124 	& 0.118	& 310M	&8.65\%  \\
orkut   			& 3.0M	&  45B 		& 223M 		& 562M  		&  1.2M  	& 0.041	& 0.041 	&627M 	&0.09\% \\ \hline
\end{tabular}
\end{table*}

\textbf{Convergence of estimate:}  \Fig{dblp-final-convergence}\subref{entire-trans} and \Fig{dblp-final-convergence}\subref{entire-trigs} demonstrate convergence of the {\em final estimates} (i.e. for $G_m$) for increasing space. 
We define storage as the  number of edges stored by our algorithm: $|\mbox{\edgeres{}}| + 2\cdot |\mbox{\wedgeres}|$. We first choose $\beta$ in $\set{0.2, 0.4, 0.6, 0.8, 1.0}$ and then vary $\alpha$ in increments of $0.0005$ up to $0.02$. For each setting of $\alpha$ and $\beta$, we plot 5 runs of the algorithm. One can see that both the transitivity and triangles estimates converge rapidly to true values as we increase the space.
\begin{figure*}[tb]
  \centering
  \subfloat[{Transitivity; 1999--2008}]{\includegraphics[width=0.24\textwidth]{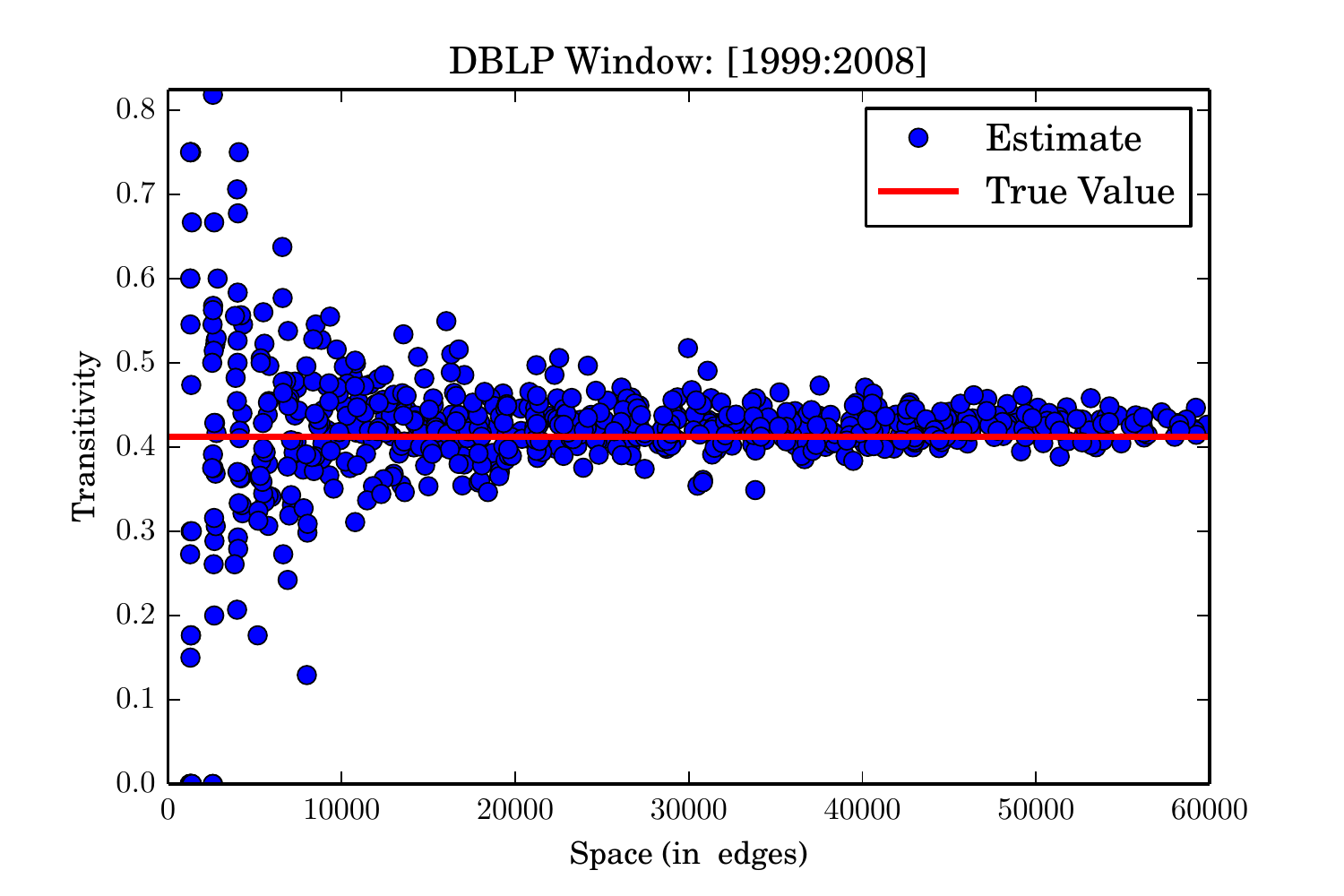}} \ 
    \subfloat[{Transitivity; 1989--2008}]{\includegraphics[width=0.24\textwidth]{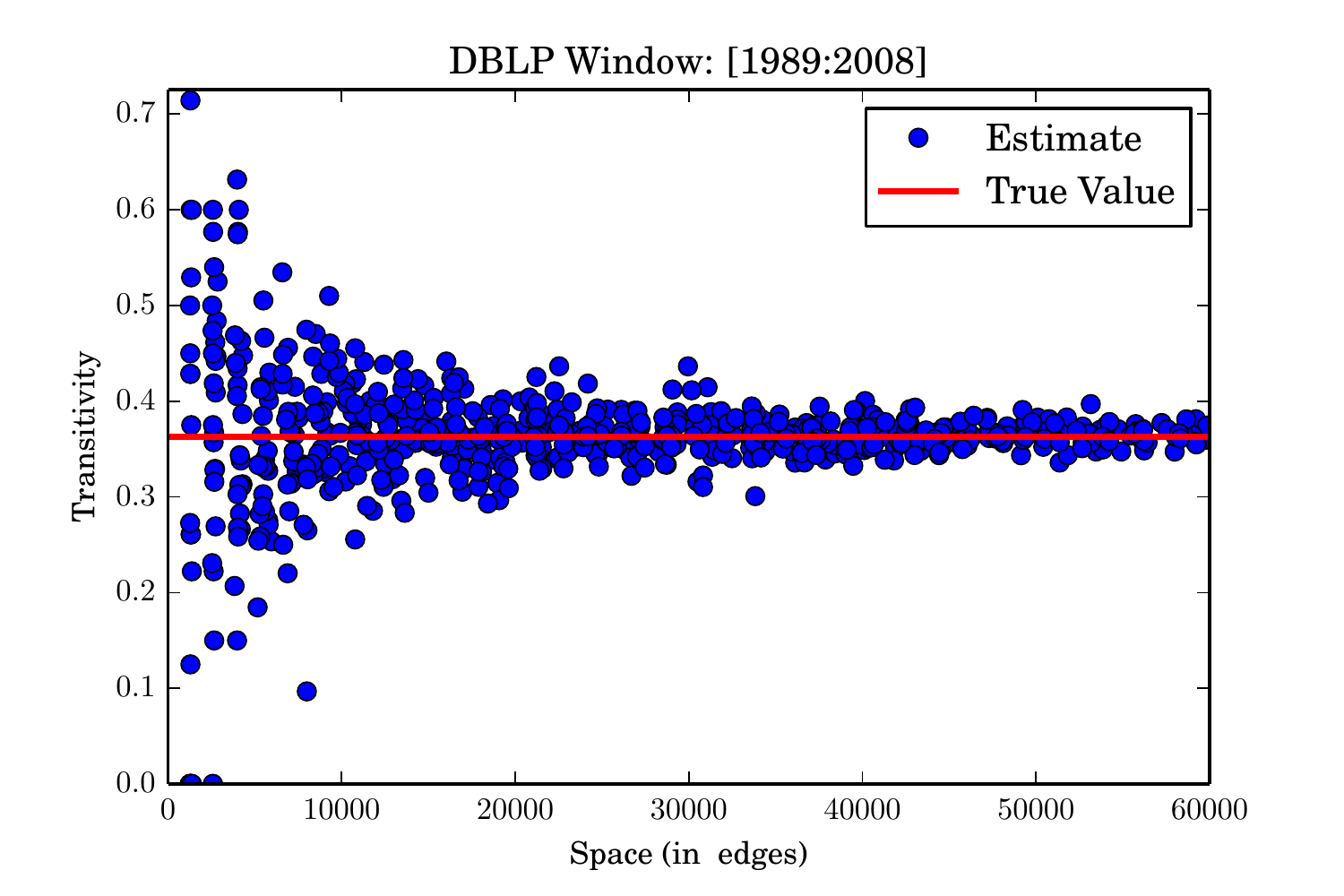}} \ 
  \subfloat[{Transitivity; 1938--2008}]{\includegraphics[width=0.24\textwidth]{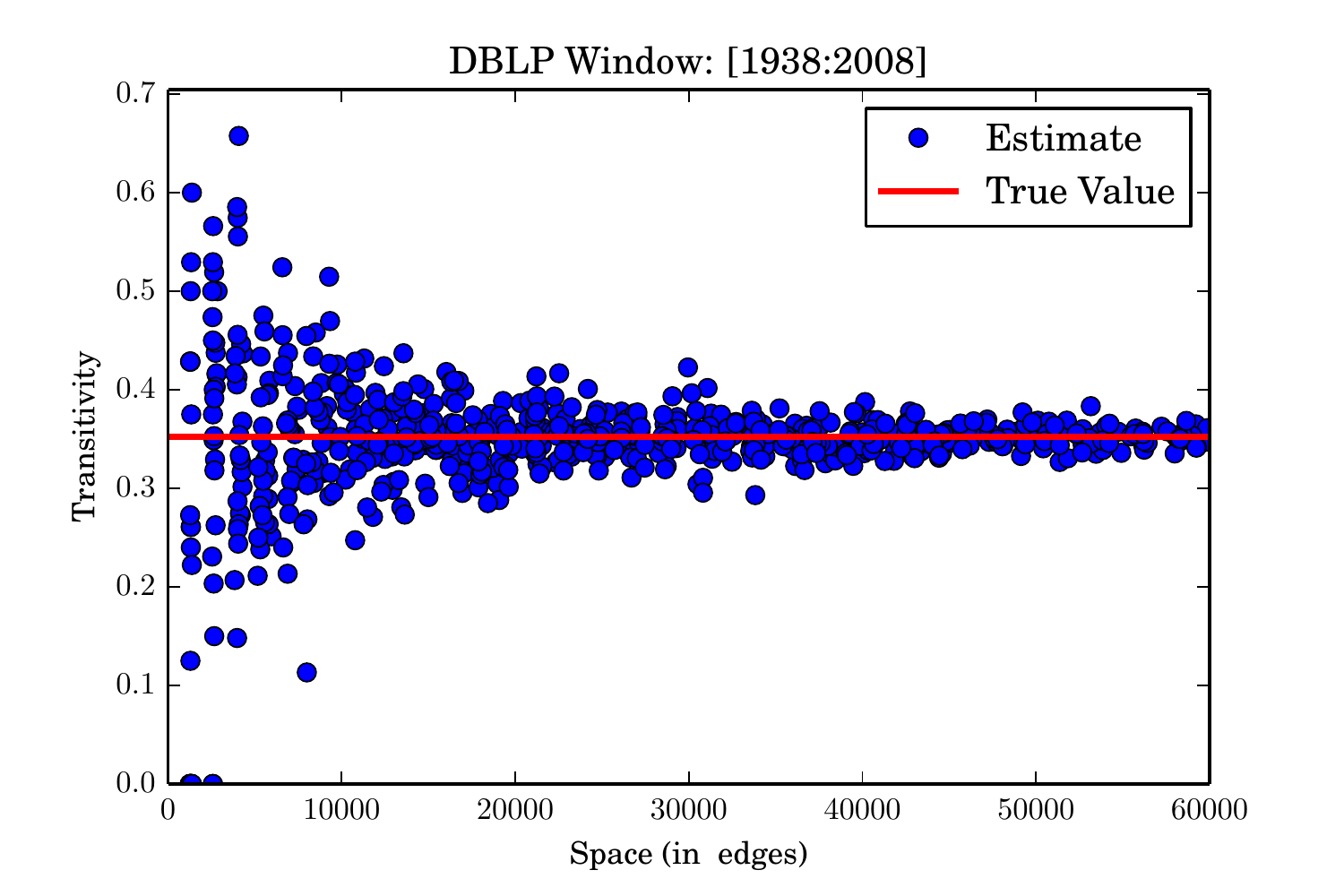}} \ 
        \subfloat[{Transitivity; entire stream}]{\includegraphics[width=0.24\textwidth]{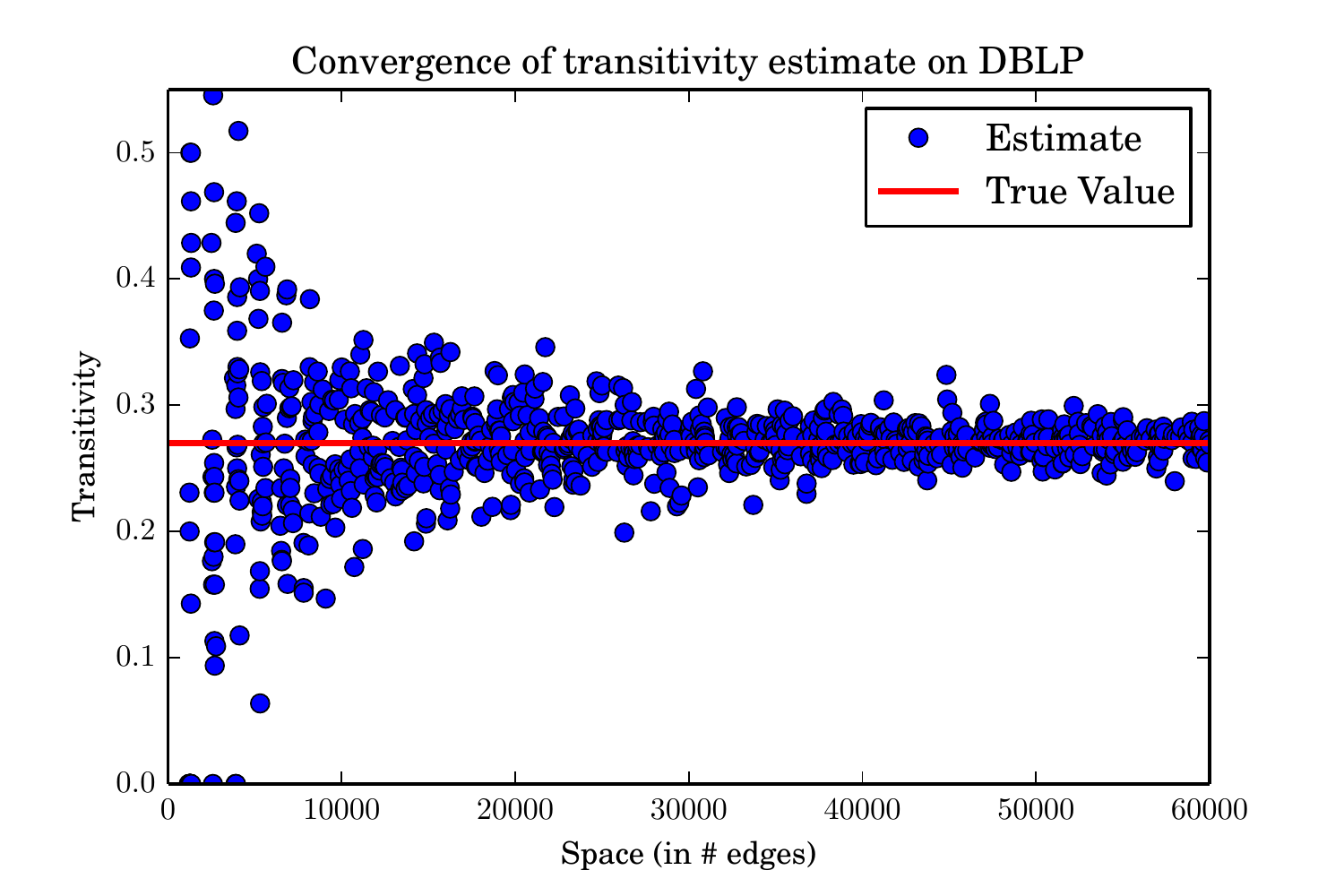}\label{entire-trans}}\
  \subfloat[{Triangles; 1999--2008}]{\includegraphics[width=0.24\textwidth]{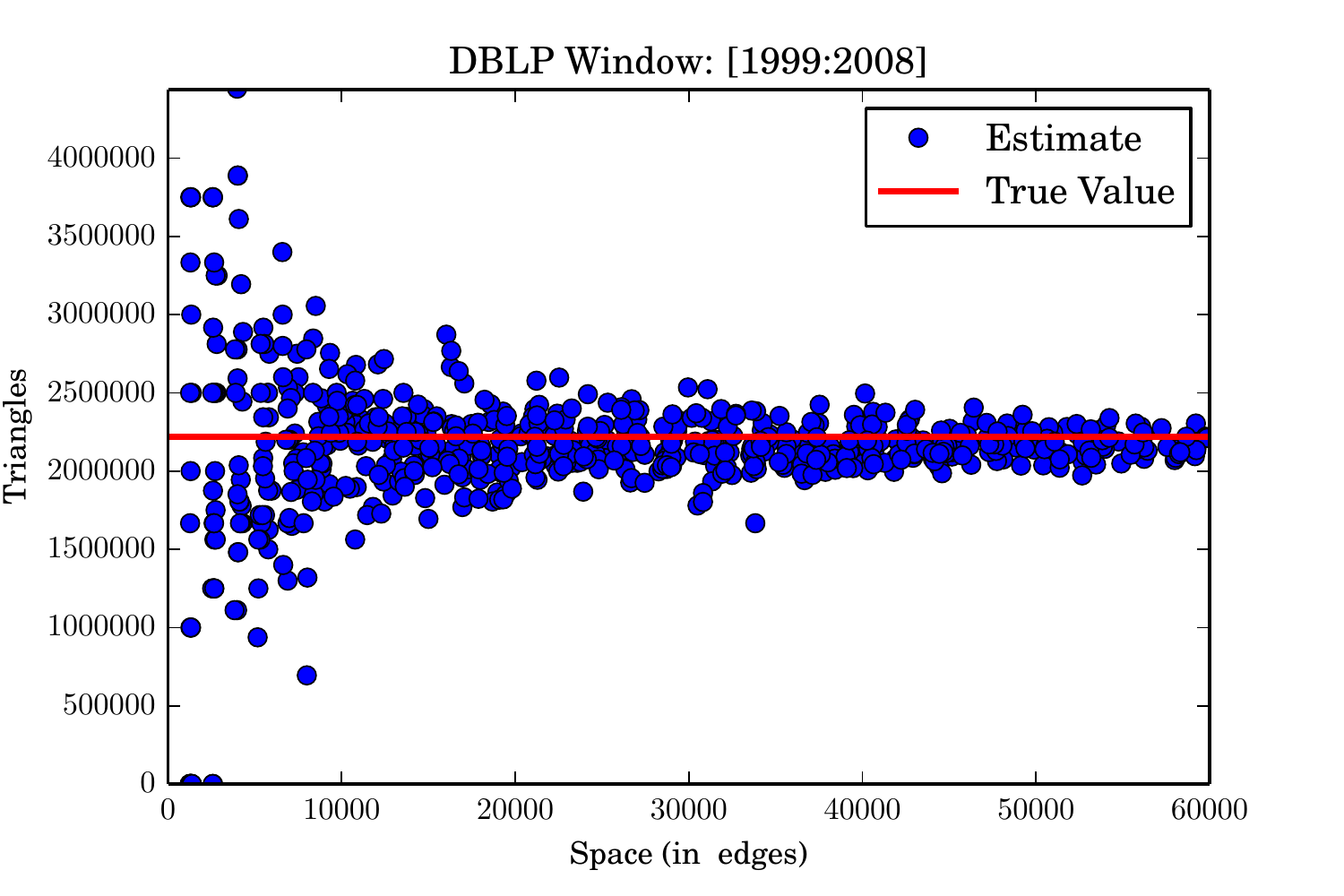}}\
    \subfloat[{Triangles; 1989--2008 }]{\includegraphics[width=0.24\textwidth]{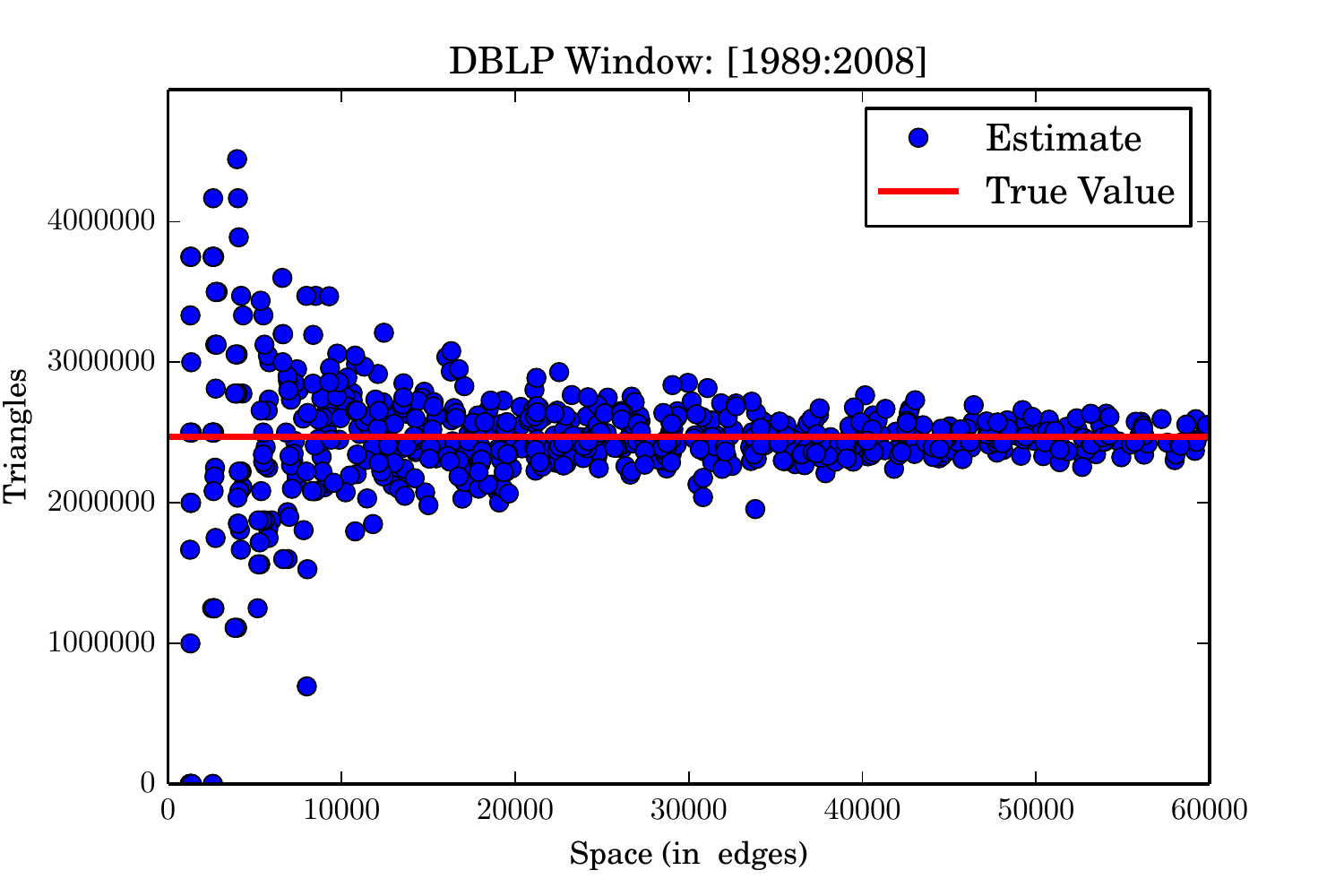}} \ 
     \subfloat[{Triangles; 1938--2008}]{\includegraphics[width=0.24\textwidth]{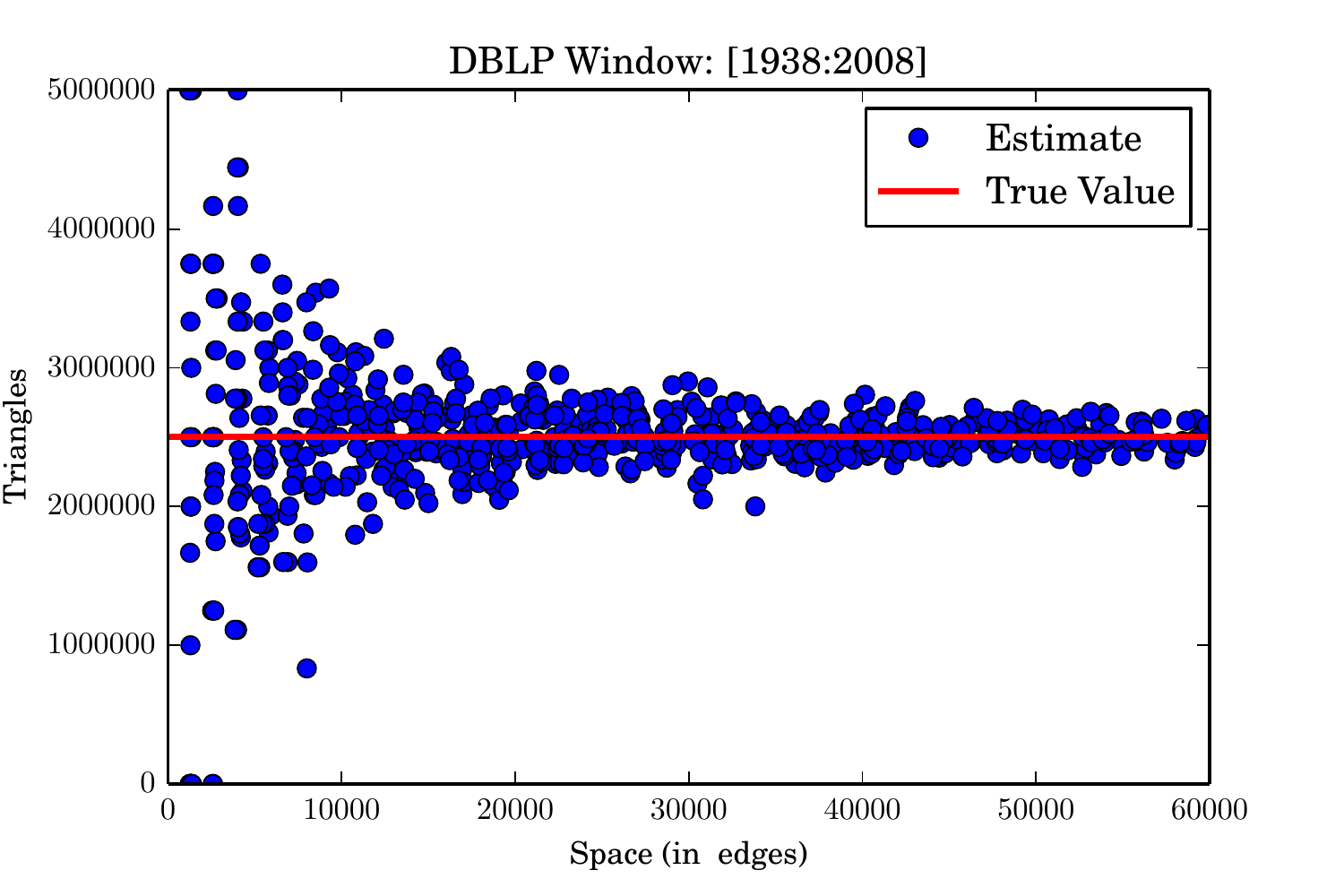}}\
    \subfloat[{Triangles; entire stream}]{\includegraphics[width=0.24\textwidth]{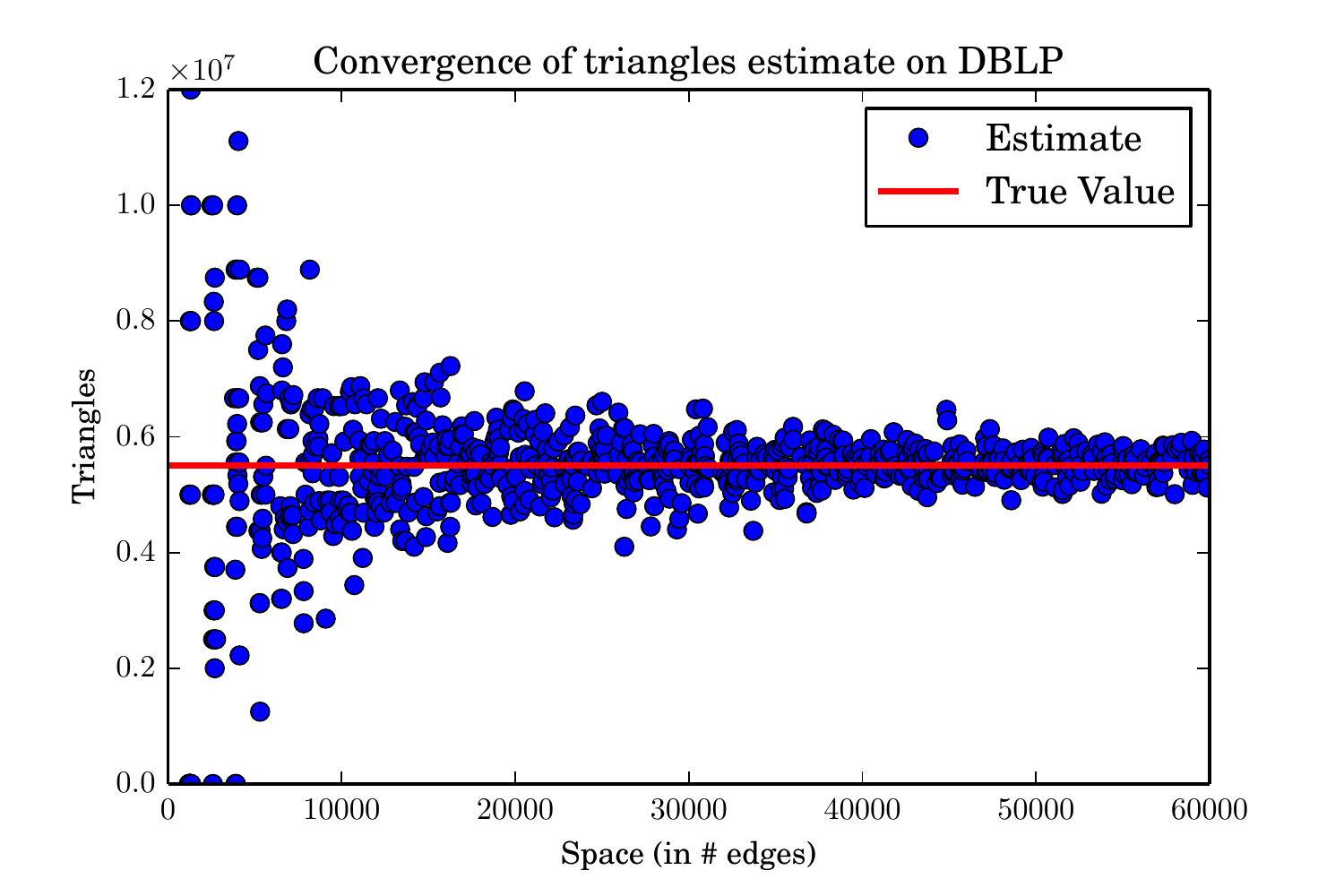}\label{entire-trigs}} \ 
  
   \caption{\DBLP{} convergence: We show that both transitivity and triangles estimates converge to true values as we increase the space. The top row  is for transitivity,  while the bottom row is for triangle counts. The plots are arranged in the order of increasing window length ending at 2008 except for the last column which corresponds to the entire stream length.}
  \label{fig:dblp-final-convergence}
\end{figure*}

Our estimates for various time windows also converge rapidly, as we demonstrate in  \Fig{dblp-final-convergence}.
For these experiments,
we picked specific time windows on \DBLP, namely, 1989--2008, 1999--2008,  and 1938--2008. This is mostly
for demonstrating the convergence of differing window sizes. We chose $\beta$  from $\set{0.2, 0.4, 0.6, 0.8, 1.0}$ and varied $\alpha$ in increments of $0.1\%$ up to $3.0\%$. For each value of $\alpha$ and $\beta$, we give 5 runs of the algorithm. 
%The storage is  the size of \edgeres{} plus twice the size of \wedgeres{} as above. 
In the plots $x$-axis gives increasing space (i.e., increasing $\alpha$) and
the $y$-axis is the estimate.

Across the board, we see rapid convergence as storage increases. For \DBLP, storage of 60K
is enough to guarantee extremely accurate results (relative errors within 5\%), for all the time windows. 
This is even true for the 10 year window, which is quite small compared to the entire stream of data
(\estimate{} will not work for window sizes of a year, since there  are not enough samples from such a window. But the number of edges in a year is small enough to store explicitly).

\textbf{Space usage:}  \Fig{space-usage} shows the space used by our algorithm in terms of parameters $\alpha$ and $\beta$. 
We measure both \edgeres{} and \wedgeres{} for varying values of $\alpha$ and  $\beta$, and plot the predictions
of \Thm{space}. We see almost perfect alignment of the predictions with \Thm{space}. 

\begin{figure*}[t]
  \centering
  \subfloat[{\edgeres{} size}]{\includegraphics[width=0.4\textwidth]{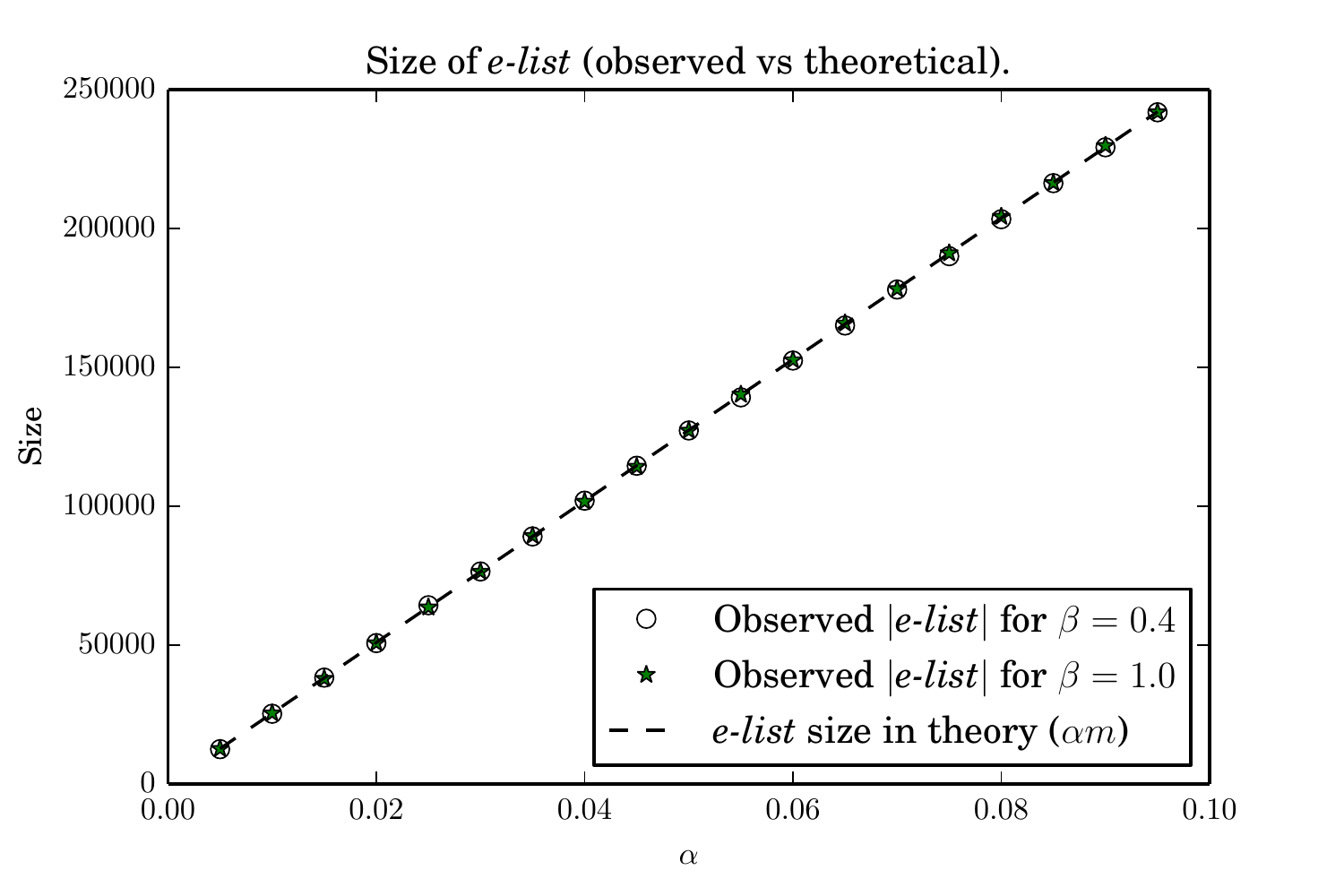}}
  \subfloat[{\wedgeres{}  size}]{\includegraphics[width=0.4\textwidth]{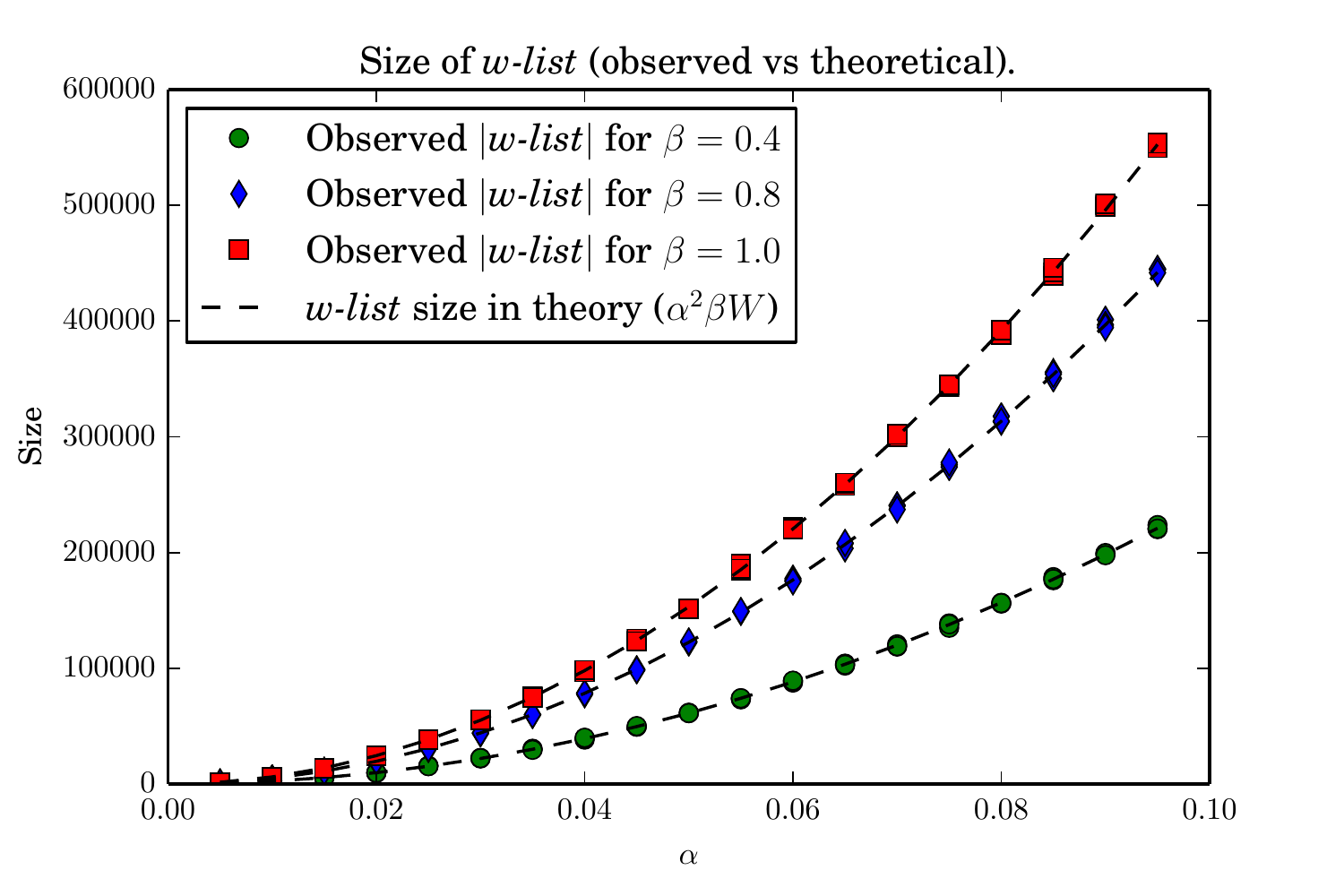}}
   \caption{Number of edges and wedges stored by \edgeres{} and \wedgeres{} respectively. The observed usage matches almost exactly with theoretical predictions.}
  \label{fig:space-usage}
\end{figure*}

%\begin{figure*}[tbh]
 % \centering
 % \subfloat[{Transitivity}]{\includegraphics[width=0.3\textwidth]{figures/comp_trans.png}} \label{fig:comp-trans} \ 
 % \subfloat[{Triangles}]{\includegraphics[width=0.3\textwidth]{figures/comp_triangles.png}}
 %  \caption{Comparing our algorithm with JSP13 and no debiasing for \DBLP}
 % \label{fig:comp}
%\end{figure*}

\textbf{Comparison with previous work:} We run the algorithms of~\cite{PaTaTi+13},~\cite{JhSePi13}, and~\cite{AhDuNe+14}, using hash based sampling to recreate uniform edge sampling in a multigraph. We first note that our implementations work correctly on the simple graph version of \DBLP, shown
in  \Fig{prev-simple}. All algorithms converge extremely rapidly. When these
algorithms are applied to the multigraph version of \DBLP, then they all converge to incorrect triangle estimates (\Fig{prev-multi}).

\textbf{Tests on a broader data set:} For more validation of \estimate{}, we run it on a large
set of real-world graphs. Most of these graphs are neither temporal nor multigraphs. We construct a multigraph stream
from each graph as follows: every edge $e$ of the graph is independently replicated with probability $1/3$
(specifically $r$ times where $r$ is uniform in $\{2, 4, 8, 16, 32\}$). The stream is obtained by randomly
permuting these multiedges.
For each graph, we only use \estimate{}  record to transitivity and triangle
count of the entire stream (the graph $G[1,m]$). The results are presented in \Tab{run-synthetic}. For these runs, we set $\alpha = 0.01$ and capped the size of wedge reservoir to $50K$ (by choosing $\beta$ appropriately). We observe that transitivity estimates are very sharp (matching the true values up to the third decimal point in many cases). 
The relative error in triangles estimates is less than $3\%$ for most cases and never exceeds $8.7\%$. The overall space used by the algorithm is at most $4\%$ of the number of edges of the underlying {\em simple} graph. We point out that for {\tt orkut} which has nearly half a billion edges (after injecting duplicate edges), the transitivity estimate closely matches with the true value and the relative error in triangles is less than $1\%$. The total storage used is less than $0.5\%$ of the edge stream.

\section{Experiments with time windows} \label{sec:implement}

\estimate{} takes as input a single time window length $\Delta t$. But observe that
the primary data structures \edgeres, \wedgeres, and $X_w$ are independent
of this window. As a result, \estimate{} can handle multiple time windows with the 
\emph{same} data structure. 
%This is critical for getting a small space algorithm that
%provides multi-resolution views of the edge stream.
We only maintain the latest timestamp for each edge, and do not store any history.
If the time window $[t-\Delta t,t]$ is too small, it is unlikely that \edgeres{}
will have any edges from this window. On the other hand,
small time windows can be stored explicitly to get exact answers. 

%For any reasonable estimate, \edgeres{}
%must have sufficiently many edges (as the assumption in \Thm{conc} states). The volume of data
%is not a concern. For this reason, our algorithm and experiments focus on time windows 
%of sufficiently large size.
%
\textbf{Triangle trends in \DBLP:} In our opinion, the following results are the real achievement of \estimate.
We wish to understand transitivity and triangle trends for \DBLP{}  in various time windows. We focus
on 5-year, 10-year, 15-year, 20-year, and entire history windows. So think of a (say) 5-year sliding time window
in \DBLP, and the aim is to report the transitivity in each such window. Refer to \Fig{dblp-window}
(``All" refers to the window that contains the entire history).
\emph{The algorithm \estimate{} makes a single pass over \DBLP{} without preprocessing and provides results for all these windows at 
every year.} 

The transitivity reveals intriguing trends. Firstly, smaller windows have higher transitivity. It shows that network clustering tends to happen
in shorter time intervals. This is probably because of the affiliation structure of coauthorship networks. 
The increase of triangle counts over time (for the same window size) may not be too surprising, given that the
volume of research increasing. But juxtapose this with 
the \emph{decreasing} of transitivity over time.
This means that (say) the transitivity in 2004--2008 is higher than 2009--2013, even though there
are more papers (and more triangles) in the latter interval. Why is this the case? Is it because of increasing
of interdisciplinary work, which might create more open wedges? Or is it simply some issue with the recording
of \DBLP{} data? Will the decreasing transitivity converge in the future, or do we expect it to simply go to zero?
Can we give a reasonable model of this behavior? We believe that the output of \estimate{}
will lead to many data science questions, and this is the real significance of the algorithm.

\textbf{Triangle trends in  \enron:} In \Fig{enron-w-tri} and \Fig{enron-w-gcc}, we present triangles and transitivity estimates for \enron{} for various windows. For this dataset, we think of a window as being defined by a specified number of past edges. In particular, apart from considering the entire past, we look at windows formed by past 200K, 400K, and 800K edges. Observe that in the beginning of the stream all these windows coincide, since the windows are equivalent. Focusing on the triangles estimate, it is clear that the estimate corresponding to the larger window size dominates that of a smaller window size. What is interesting for \enron{} dataset is that the same ordering is observed even for transitivity estimates. That is, in general, a transitivity estimate curve corresponding to the larger size window dominates the one corresponding to the smaller size. We observe a completely opposite behavior with \DBLP{} transitivity curves, see \Fig{dblp-window}.

Another interesting observation is that in case of \enron{}, the curves for triangles estimates for smaller window lengths flattens out  whereas that in \DBLP{} the curves for triangle estimates continue to rise even for smaller time windows. This indicates that the growth of {\em total number of triangles} is superlinear in \DBLP{}  (with respect to the number of years) whereas it is nearly linear (with respect to the number of edges seen so far) in case of \enron. Indeed the final estimate for the number of triangles in \enron{} is almost the same as the number of edges in the stream.

%\begin{figure*}[tb]
%  \centering
% \end{figure*}

\ignore{
\section{Conclusions} 

%\Sesh{Rewrite}
We have described a streaming algorithm  to compute the number of triangles  and the  transitivity of a  multigraph. Our algorithm can seamlessly compute estimates for stream windows of various sizes in real-time. It would be very interesting to carefully study the behavior of \estimate{} on more real-world edge streams, to understand the evolution
of triangles over time. It appears that \estimate{} reveals phenomena at different timescales, which might be an aid to finding the ``right"
window for aggregation. Designing good models for temporal graphs is a big open problem, and our findings on \DBLP{} and \enron{} might provide some
useful information towards that objective.
}

\section*{Acknowledgements} 
We thank Ashish Goel for suggesting the use of hash-function based reservoir sampling. This was a key step towards the development of the final algorithm.

\vspace{-3ex} 
{
\scriptsize
\bibliographystyle{abbrv}

%\bibliography{streaming_triangles2}  
}

\clearpage 
\ifnum\supp=1
\end{document}
\else
\begin{appendix}
\ignore{
\twocolumn[
\begin{center}
 {\Large 
 Supplementary Material for  the paper \\[2ex]
 Counting Triangles in Real-World Graph Streams: Dealing with Repeated Edges and 
Time Windows \\[3ex]

 Madhav Jha, C. Seshadhri, and Ali Pinar
\medskip
}
\end{center}
]
}
\section{Proof of Theorems in Section 3.2}

For proofs in this section we recall the following lemma. 
\problemma*
\noindent Next we restate and prove \Thm{space}.
\spacethm*
%\begin{theorem} \label{thm:app-space} The expected storage of \estimate{} at time $t$
%is $\alpha E(G[1,t]) + \alpha^2\beta W(G[1,t])$.
%\end{theorem}

\begin{proof} 
For each edge $e$, let $Z_e$ be the indicator for $e$ being in \edgeres{} at time $t$.
The expected size of \edgeres{} is $\E[\sum_{e \in E(G[1,t])} Z_e]$. By \Lem{prob}, $\E[Z_e] = \alpha$, and linearity of expectation completes the proof.
An identical argument holds for \wedgeres.
\qed
\end{proof}

\wedgethm*
%\begin{theorem} \label{thm:app-wedge} $\E[\widehat{W}_t] = |W_t|$.
%\end{theorem}

\begin{proof} For any wedge $w \in W_t$, let $Y_w = 1$
if $w \in$ \wedgeres{} and $0$ otherwise. Note that $\widehat{W}_t = (\alpha^2\beta)^{-1} \sum_{w \in W_t} Y_w$.
We have $\E[Y_w] = \alpha^2\beta$ by \Lem{prob}. By linearity of expectation, 
\begin{eqnarray*}
\E[\widehat{W}_t] & = & (\alpha^2\beta)^{-1}\E[\sum_{w \in W_t} Y_w] \\
& = & (\alpha^2\beta)^{-1}\sum_{w \in W_t} \E[ Y_w] = |W_t| 
\end{eqnarray*}
\qed
\end{proof}

Finally, we prove the concentration theorem.
\concthm*

%\begin{theorem} \label{thm:conc} Fix some sufficiently small $\gamma > 0$.
%Suppose that $(\alpha^2\beta)|W_t|$ (the expected number of wedges in $W_t$
%that are in \wedgeres) is at least $1/\gamma^6$. Furthermore
%$|W_t| \geq |E_t|$ (there are at least as many wedges in $G_t$ as edges).
%Then, $\Pr[|\widehat{W}_t - |W_t|| > \gamma|W_t|] < \gamma$, 
%$\Pr[|\widehat{T}_t - |T_t|| > \gamma|W_t|] < \gamma$, and
%$Pr[|\widehat{\trans}_t - \trans_t| > 8\gamma] < 4\gamma$.
%\end{theorem}
The most important step is to prove a variance bound for $\widehat{W}_t$
and $\widehat{T}_t$. After this, the proofs follow from a routine application
of Chebyschev's inequality.

%\begin{lemma} 
\begin{restatable}{lemm}{varbound}
\label{lem:var} $\max(Var[\widehat{W}_t],Var[\widehat{T}_t])$ $\leq (\alpha^2\beta)^{-1}|W(G_t)|$ $+ 8\alpha^{-1}|W(G_t)|^{3/2}$.
\end{restatable}
%\end{lemma}

\begin{proof} We deal with $\widehat{W}_t$ first.
\begin{eqnarray*}
	Var[\widehat{W}_t] & = & \E[(\widehat{W}_t)^2] - (\E[\widehat{W}_t])^2 \\
	& = & (\alpha^2\beta)^{-2}\E[\sum_{w \in W_t} \sum_{x \in W_t} Y_w Y_x] - |W_t|^2
\end{eqnarray*}
The double summation can be split based on three cases: (i) $w = x$, (ii) $w$ and $x$ are disjoint (they do not share an edge), 
and (iii) $w$ and $x$ have a common edge. For convenience, we will use $\sum_w$
as shorthard for $\sum_{w \in W_t}$. We use the definition of indicator $Y_w$
from \Thm{wedge}.
\begin{eqnarray*}
 & & \E[\sum_{w} \sum_{x} Y_w Y_x] \\
 & = & \sum_{w} \E[Y^2_w] + \sum_{w \cap x = \emptyset} \E[Y_wY_x] + \sum_{w \cap x \neq \emptyset}\E[Y_wY_x] 
\end{eqnarray*}
The first and second are relatively easy to deal with. Since $Y_w$ is an indicator, $Y^2_w = Y_w$ and $\sum_{w} \E[Y_w] = (\alpha^2\beta) |W_t|$.
When $w \cap x = \emptyset$, note that $Y_w$ and $Y_x$ are independent. This is because we assume
that $hash$ is a random function. Hence,
\begin{eqnarray*}
\sum_{w \cap x = \emptyset} \E[Y_wY_x] & = & \sum_{w \cap x = \emptyset} \E[Y_w]\E[Y_x] \leq \sum_{w,x} \E[Y_w]\E[Y_x] \\
& = & (\sum_w \E[Y_w])^2 = (\alpha^2\beta)^2 |W_t|^2
\end{eqnarray*}
Now for the interesting part. Suppose $w \cap x \neq \emptyset$, so $w = \{e_1,e_2\}$ and $x = \{e_1,e_3\}$.
The product $Y_wY_x$ is $1$ iff $e_1, e_2, e_3$ are all in \edgeres{} and both $w$ and $x$ get selected
in \wedgeres. The probability of this is $\alpha^3\beta^2$. How many pairs of wedges $w \cap x \neq \emptyset$
are there? This is exactly $\sum_i {d_i\choose 3}$, where $d_i$ is the degree of vertex $i$ in $G_t$. 
In the following, we use the fact that the $\ell_3$-norm is smaller than the $\ell_2$-norm.
(We also use the bound $\sum_i d_i^2 \leq 4|W_t|$, which follows because $\sum_i  d_i^2 = 2 (|W_t| + |E_t|)$ and by the statement of the theorem $|W_t| \geq |E_t|$.)
\begin{eqnarray*}
\sum_{w \cap x \neq \emptyset} \E[Y_wY_x] & = & (\alpha^3\beta^2) \sum_i {d_i\choose 3} \\
& \leq & (\alpha^3\beta^2) \sum_i d^3_i \\
& \leq & (\alpha^3\beta^2) (\sum_i d^2_i)^{3/2} \\
& \leq & 8(\alpha^3\beta^2) |W_t|^{3/2}
\end{eqnarray*}
Putting it all together,
\begin{eqnarray*}
	Var[\widehat{W}_t] 	& \leq & (\alpha^2\beta)^{-2}[(\alpha^2\beta)|W_t| + (\alpha^2\beta)^2|W_t|^2\\
	& & + 8(\alpha^3\beta^2) |W_t|^{3/2}] - |W_t|^2 \\
	& = & (\alpha^2\beta)^{-1}|W_t| + 8\alpha^{-1}|W_t|^{3/2}
\end{eqnarray*}
Note that $\widehat{T}_t = \sum_w X_w$. We apply an argument identical to 
that above for $Var[\widehat{T}_t]$.
\end{proof}

\Thm{conc} follows fairly directly from the variance bound.

\begin{proof} (of \Thm{conc})
To prove a concentration bound, we will use Chebyschev's inequality. Let $Var[\widehat{W}_t]$ be the variance
of $\widehat{W}_t$. Then $\Pr[|\widehat{W}_t - \E[\widehat{W}_t]| > h] \leq Var[\widehat{W}_t]/h^2$.
Using \Lem{var},
\begin{eqnarray*}
	& & \Pr[|\widehat{W}_t - |W_t|| > \gamma|W_t|] \\
	& \leq & [(\alpha^2\beta)^{-1}|W_t| + 8\alpha^{-1}|W_t|^{3/2}]/(\gamma^2|W_t|^2) \\
	& = & 1/(\alpha^2\beta|W_t|\cdot\gamma^2) + 8/(\alpha|W_t|^{1/2}\cdot\gamma^2)
\end{eqnarray*}
Since $\alpha^2\beta|W_t| \geq 1/\gamma^6$, $\alpha|W_t|^{1/2} \geq 1/\gamma^3$.
Plugging this bound in, the final probability is at most $\gamma$.

An identical argument holds for $\widehat{T}_t$.
\end{proof}

We apply a Bayes' rule argument to prove bounds of $\widehat{\trans}_t$.
%\begin{theorem} 
\begin{restatable}{thmm}{transthmm}
\label{thm:trans} Assume the conditions of \Thm{conc}.
$|\E[\widehat{\trans}_t] - \trans_t| \leq 10\gamma$ and
$\Pr[|\widehat{\trans}_t - \trans_t| > 8\gamma] < 4\gamma$.
\end{restatable}
%\end{theorem}

 \begin{proof} (of \Thm{trans}) We have $\widehat{\trans}_t = 3\widehat{T}_t/\widehat{W}_t$
if $\widehat{W}_t \neq 0$ and $0$ otherwise. Let $\cE$ denote the event
that $|\widehat{W}_t - |W_t|| \leq \gamma|W_t|$ and $|\widehat{T}_t - |T_t|| \leq \gamma|W_t|$.

Conditioned on $\cE$, 
$$ 3\widehat{T}_t/\widehat{W}_t \leq (3|T_t| + 3\gamma|W_t|)/(1-\gamma)|W_t|
\leq (1+2\gamma)\trans_t + 6\gamma $$
Similarly, conditioned on $\cE$ $3\widehat{T}_t/\widehat{W}_t \geq (1-2\gamma)\trans_t - 6\gamma$.
By \Thm{conc}, $\Pr[\cE] \geq 1 - 2\gamma$. This proves that
$\Pr[\widehat{\trans}_t] - \trans_t| > 8\gamma] \leq \Pr[\overline{\cE}] \leq 2\gamma$.
To bound the expectation, we simply use Bayes' rule.
\begin{eqnarray*} 
\E[\widehat{\trans}_t] = \E[\widehat{\trans}_t | \cE] \Pr[\cE] + \E[\widehat{\trans}_t | \overline{\cE}] \Pr[\overline{\cE}]
\end{eqnarray*}
Since $\widehat{\trans}_t \in (0,1)$, the latter term is in the range $(0,2\gamma)$.
This completes the proof.
\end{proof}

\section{Additional Experimental Results} 
\begin{figure*}[thb]
  \centering
  \subfloat[{Transitivity}]{\includegraphics[width=0.35\textwidth]{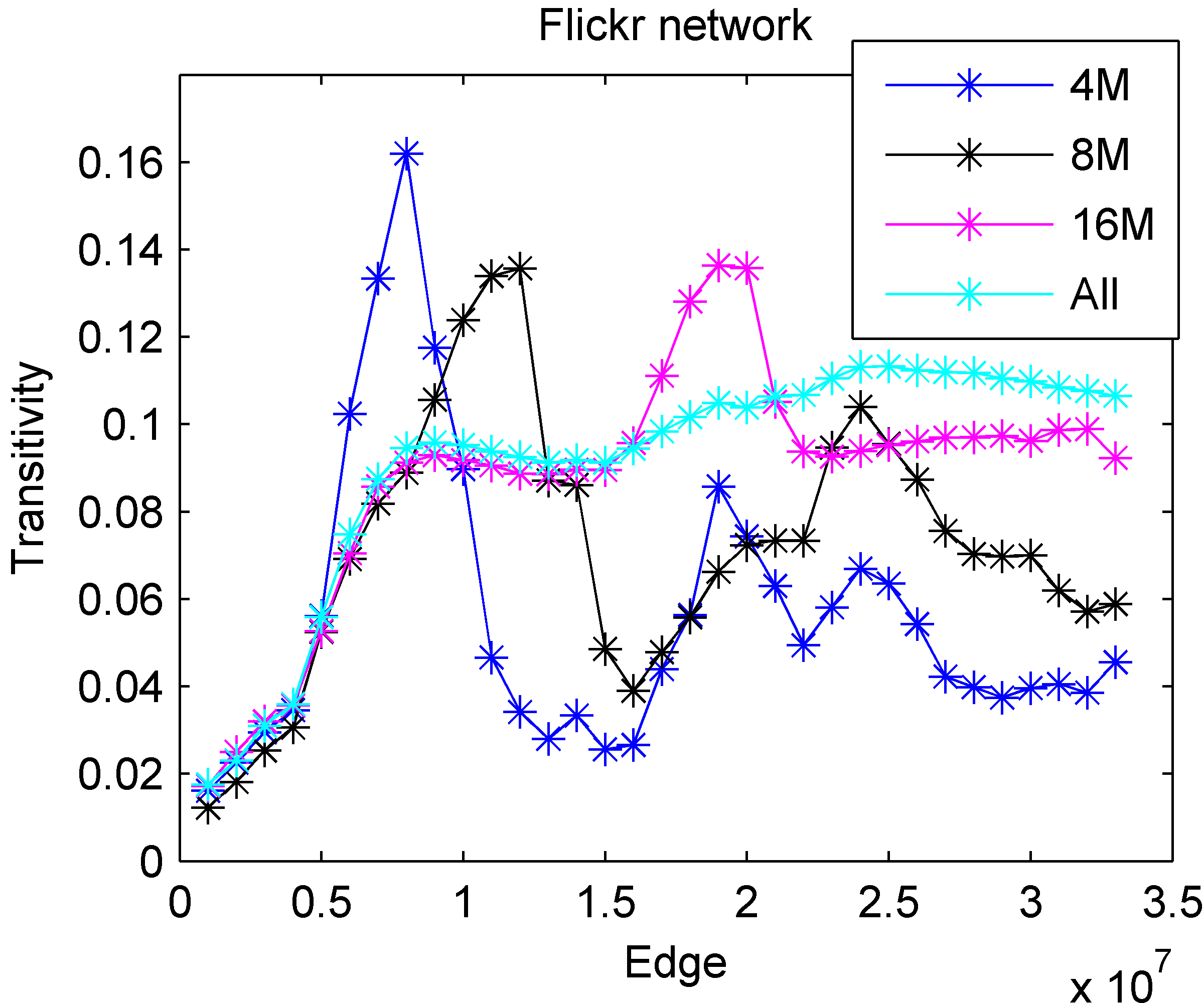}} \ 
  \subfloat[{Triangles}]{\includegraphics[width=0.35\textwidth]{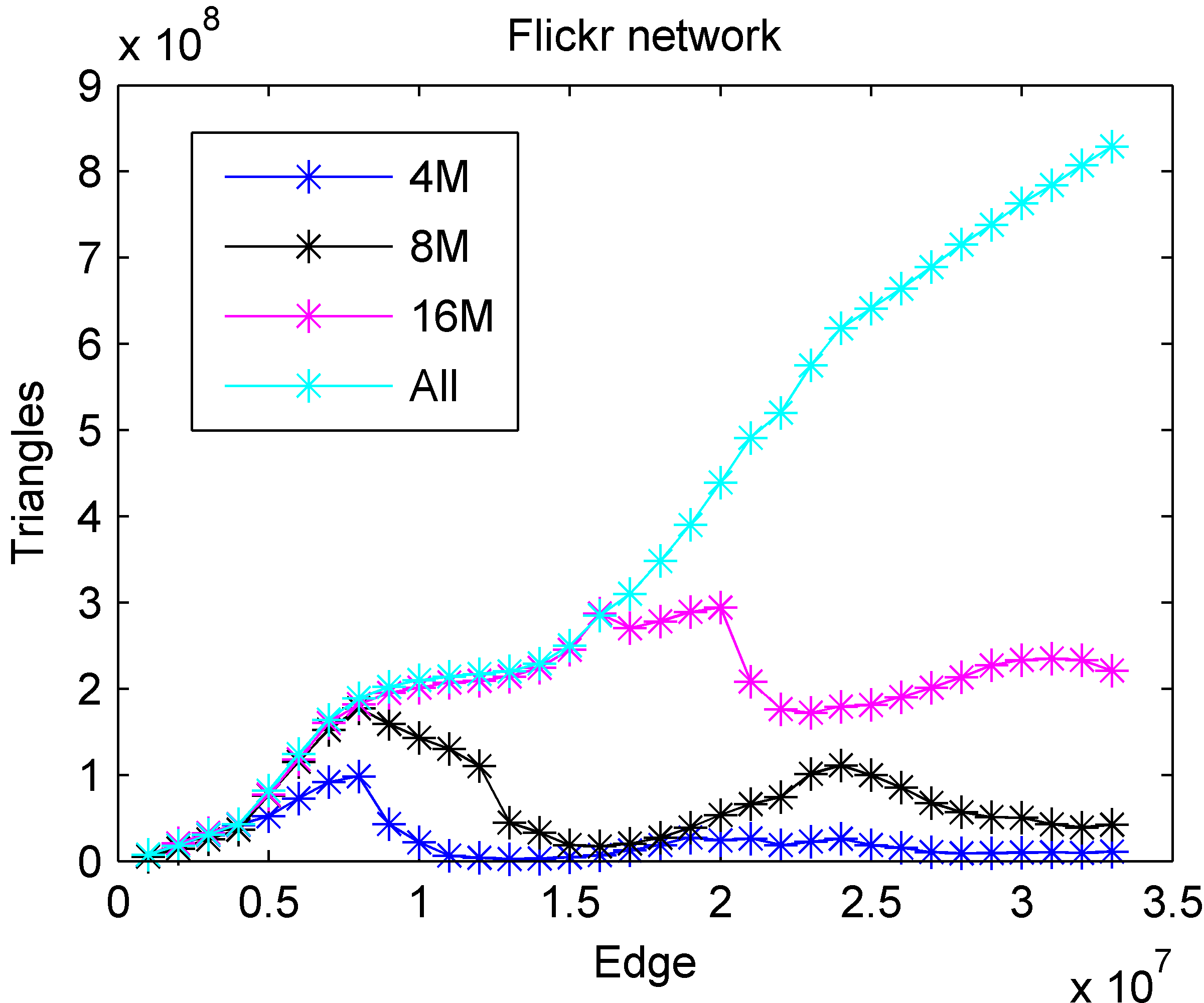}}
   \caption{Similar to the \enron{} experiments, we use  number of edges to define time windows. The algorithm is run with $\alpha = \beta = 0.02$. }
  \label{fig:flickr-window}
\end{figure*}

\textbf{Triangle trends in \flickr:} This is much larger dataset with 33M multiedges. We focus on time windows formed by the past 4M, 8M, 16M, and all history.
These results are given in \Fig{flickr-window}. We are able to get these results with merely 640K edge storage, less than 2\% of the edge stream.

\end{appendix}

\end{document}